\providecommand{\U}[1]{\protect\rule{.1in}{.1in}}
\newtheorem{theorem}{Theorem}
\newtheorem{lemma}[theorem]{Lemma}
\newtheorem{proposition}[theorem]{Proposition}
\newtheorem{remark}[theorem]{Remark}
\newenvironment{proof}[1][Proof]{\noindent\textbf{#1.} }{\ \rule{0.5em}{0.5em}}
\newcommand{\cF}{{\mathcal{F}}}
\newcommand{\cB}{{\mathcal{B}}}
\newcommand{\cH}{\mathcal{H}}
\newcommand{\R}{\mathbbm{R}}
\newcommand{\N}{\mathbbm{N}}
\newcommand{\nn}{\nonumber}
\newcommand{\bP}{\mathbbm{P}}
\newcommand{\bE}{\mathbbm{E}}
\newcommand{\supp}{\operatorname{supp}}
\newcommand{\X}{\mathcal{X}}
\newcommand{\cP}{\mathcal{P}}
\newcommand{\Z}{\mathcal{Z}}
\newcommand{\cS}{\mathcal{S}}
\newcommand{\cE}{\mathcal{E}}
\newcommand{\cX}{{\cal X}}
\newcommand{\cM}{\mathcal{M}}
\newcommand{\cN}{\mathcal{N}}
\newcommand{\1}{\mathbbm{1}}
\newcommand{\cD}{\mathcal{D}}
\def\>{{\rangle}}
\def\<{{\langle}}
\newcommand{\be}{\begin{equation}}
\newcommand{\ee}{\end{equation}}
\newcommand{\bea}{\begin{eqnarray}}
\newcommand{\eea}{\end{eqnarray}}
\newcommand{\eps}{\varepsilon}
\newcommand{\teps}{{\tilde{\varepsilon}}}
\newcommand{\ket}[1]{|#1\rangle} 
\newcommand{\bra}[1]{\langle#1|} 
\newcommand{\kb}[1]{|#1\rangle\!\langle#1|} 
\newcommand{\Tr}{\mathrm{Tr}}
\newcommand{\comment}[1]{}
\newcommand{\red}[1]{{\color{red} #1}}
\newcommand{\stc}{\rm{sc}}
\DeclarePairedDelimiter\floor{\lfloor}{\rfloor}
\newcommand{\tr}{{\rm Tr}}
\definecolor{colorthree}{rgb}{0.01,0.51,0.93}
\title{Total insecurity of communication via strong converse for quantum privacy amplification}
\begin{document}
{\author{ Robert Salzmann\thanks{University of Cambridge, Department of Applied
 and Theoretical Physics, Wilberforce Road, Cambridge CB3 0WA,
United Kingdom}
 \and Nilanjana Datta\footnotemark[1]}}
\maketitle

\begin{abstract}
Quantum privacy amplification is a central task in quantum cryptography.
Given shared randomness, which is initially correlated  with a quantum system held by an eavesdropper, the goal is to extract uniform randomness which is decoupled from the latter. The optimal rate for this task is known to satisfy the strong converse property and we provide a lower bound on the corresponding strong converse exponent. In the strong converse region, the distance of the final state of the protocol from the desired decoupled state converges exponentially fast to its maximal value, in the asymptotic limit.  We show that this necessarily leads to totally insecure communication by establishing that the eavesdropper can infer any sent messages with certainty, when given very limited extra information. In fact, we prove that in the strong converse region, the eavesdropper has an exponential advantage in inferring the sent message correctly, compared to the
achievability region. Additionally we establish the following technical result, which is central to our proofs, and is of independent interest: the smoothing parameter for the smoothed max-relative entropy satisfies the strong converse property. 
\end{abstract}

\section{Introduction}

Finding optimal rates of information-theoretic tasks, such as compression of information for efficient storage, transmission of information through noisy quantum
channels, privacy amplification and entanglement manipulation, are of fundamental interest in quantum information theory. 
Depending on the specific task at hand, the optimal rate is either an optimal gain, quantifying the maximum
rate at which a desired target resource\footnote{A quantum information source, a noisy quantum channel, shared- entanglement or randomness, are examples of resources in quantum information theory} can be produced in the process, or an optimal cost, quantifying the minimum rate 
at which an available resource is consumed in accomplishing the task. Any rate which lies below (resp.~above)
the optimal gain (resp.~cost) is said to be an {\em{achievable rate}}. This is because, for any such rate,
there is a corresponding protocol for accomplishing the task successfully,
i.e., such that the error, $\eps_n$, incurred in the protocol for $n$ successive uses of the underlying resource vanishes in
the so-called asymptotic limit ($n \to \infty$). In contrast, for protocols
with non-achievable rates, the error does not vanish asymptotically. The optimal rate of an
information-theoretic task is said to satisfy the {\em{strong converse property}} if any sequence of
protocols with a non-achievable rate fails with certainty in the asymptotic limit. That is, the incurred error, $\eps_n$, is not only bounded 
away from zero but necessarily converges to one in the asymptotic limit. Often this convergence can be shown to be exponential
in $n$, i.e. \begin{align*}\eps_n \sim  1 - 2^{-rn}\end{align*}
for some positive constant $r$. In this case the smallest such constant is called the {\em{strong converse exponent}} of the task.

The above mentioned optimal rates are typically evaluated in the so-called asymptotic, memoryless setting, namely, a setting in which one assumes that $(i)$ there is no correlation between successive uses of the underlying resource and $(ii)$ the latter is available for arbitrarily many uses; moreover, one demands that the error vanishes in the asymptotic limit. The optimal rates are given by entropic quantities stemming from the quantum relative entropy~\cite{Umegaki_ConditionalExpectation_1962}, which hence serves as a parent for these quantities. In contrast, in the so-called one-shot or finite blocklength setting, one considers just a single use (resp.~a finite number of uses) of the resource, and hence it is unrealistic to demand that the error vanishes. So one allows for a small but non-zero error (say, $\eps \in (0,1)$). The analogous entropic quantities in this case are given by {\em{smoothed}} entropic functions, stemming from corresponding parent quantities which are {\em{smoothed}} generalised divergences, with the smoothing parameter being given by the error threshold $\eps$. Examples of such divergences are the smoothed max- and min-relative entropies\footnote{This is also known as the hypothesis testing relative entropy \cite{WangRenner_OneShotClassCapacity_2012}.}. For any $\eps \in (0,1)$, the smoothed divergence between two tensor power states (e.g. $\rho^{\otimes n}$ and $\sigma^{\otimes n}$) has been shown to reduce to the quantum relative entropy, $D(\rho\|\sigma)$, in the limit $n \to \infty$. This important property is called the quantum asymptotic equipartition property (AEP). This property and the fact that the smoothing parameter corresponds to an error, leads to a natural definition of a strong converse exponent of a smoothed generalized divergence, without reference to any information-theoretic task. For example, let us consider the smoothed max-relative entropy, $D_{\max}^\eps(\rho^{\otimes n}\|\sigma^{\otimes n} )$, of a pair of tensor power (or i.i.d.) states $\rho^{\otimes n}$ and $\sigma^{\otimes n}$. It is known that if this divergence is constrained to be greater than $nr$ for any fixed $r > D(\rho\|\sigma)$, then the smoothing parameter is known to vanishes asymptotically as $n \to \infty$. Recently, the exact exponent with which it vanishes, was evaluated by Li, Yao and Hayashi~\cite{LiYao_ExponentSmoothMaxEntr_2021}\footnote{This was done in the case in which the smoothed divergence is defined in terms of the purified distance.}. In this paper, we show that in contrast, if one constrains $D_{\max}^\eps(\rho^{\otimes n}\|\sigma^{\otimes n} )$ to be less $nr$ for some $r < D(\rho\|\sigma)$, then the smoothing  parameter converges to one exponentially. We call the exact exponent of this convergence the {\em{strong converse exponent}} of the smoothed max-relative entropy, and provide a lower a bound on it which is tight if $\rho$ and $\sigma$ commute.

 In this paper we focus on the task of quantum privacy amplification (also known as randomness extraction) which is of central importance in quantum cryptography. In it two distant parties (say, Alice and Bob) initially share some randomness, given by a random variable $X$, which is only partially secure, in the sense that an adversary (say, Eve) holds a system, $E$, which is correlated with $X$. The aim of Alice and Bob is to distil uniform randomness (or shared secret key) from $X$, using one-way communication, such that the resulting key is uncorrelated with Eve's system, and is hence secure. This key can then be employed for secure communication between Alice and Bob. The case in which Eve's system $E$ is a classical random variable, was studied in \cite{BennettBrassardRobert_PrivAmpl_1988,ImpagliazzoLevinLuby_PseudoRandom(PrivacyAmpl)_1989,BennettBrassardCrepeauMaurer_GenPrivAmpl_1995}. The case in which Eve is a quantum adversary, that is, her system $E$ is a quantum system, was studied in \cite{KoenigMaurerRenner_PowerQuantumMemory_2005,Renner_PhDThesis_2005,RennerKoenig_UniversiallyCompPrivAmpl_2005}. The optimal rate of privacy amplification (or secret key distillation) in both these cases, in the asymptotic memoryless setting, was shown to be given by the conditional entropy $H(X|E)$. Any attempt to distil uniform randomness at rates greater than this conditional entropy, fails with certainty, with the error incurred in the protocol converging exponentially to one. In this paper, we evaluate the strong converse exponent for this task, i.e.~the speed of this exponential convergence, in the case in which Eve is a quantum adversary.  

As mentioned above, a privacy amplification protocol is considered to be successful if the final shared randomness between Alice and Bob is close to uniform, and Eve has negligible information about it. Mathematically, in the asymptotic memorlyless setting, one requires that the distance between the final state of the protocol\footnote{This is a classical-quantum state with the classical part corresponding to the final shared randomness and the quantum part is with Eve.} and a decoupled state, in which the classical system is completely mixed (and hence corresponds to uniform randomness) and is uncorrelated with Eve's system, vanishes in the asymptotic limit. It is clear that this ensures that the generated key is secure and hence can be used subsequently for secure communication between Alice and Bob. In contrast, in the strong converse region, this distance converges exponentially to one. Hence one would expect that if Alice employs keys which are generated at an asymptotic rate greater than  $H(X|E)$ to send messages to Bob, then the communication is {\em{totally insecure}}. In this paper we additionally provide a precise mathematical meaning to the notion of {\em{total insecurity}} by establishing that with very little extra information Eve can infer Alice's messages with certainty in the asymptotic limit. Thus we prove a strong converse theorem for secure communication.

\subsection{The setup of quantum privacy amplification}
Let the random variable $X \sim p_x$, $x \in \mathcal{X}, $ denote the common shared randomness between two distant parties (Alice and Bob) at the start of a privacy amplification process. It is initially correlated with a quantum system, $E$, held by an eavesdropper (Eve). The initial state of the process is hence represented by the classical-quantum (c-q) state
\begin{align}
\label{eq:cqstate}
\rho_{XE} = \sum_{x\in\X}p_x \kb{x}\otimes\rho_E^{x}.
\end{align}
Alice and Bob apply a (hash) function $f: \cX \to \Z$, with $|\Z| < |\cX|$, with the aim of extracting a random variable $Z$ which is uniformly distributed and independent of the state of $E$. 
The goal is to $(i)$ maximize the {\em{size of the extracted randomness}}, given by $|\Z|$, and $(ii)$ minimize the \emph{conversion error}, i.e., the distance between the resulting state, $ \rho^f_{ZE}$, of the protocol and the desired decoupled state $\frac{\1_\Z}{|\Z|}\otimes \rho_E$. Here we have denoted Eve's average state by $\rho_{E} = \sum_{x} p_x\rho_E^x$. 

We denote by $l^\eps(X|E)$ the largest key length, $\log|\Z|$, for which the conversion error can be made less than or equal to a fixed $\eps\ge 0$. In the so-called asymptotic i.i.d.~setting, $n$ copies of the c-q state \eqref{eq:cqstate} are available and one is interested in the asymptotic behaviour of the key length $l^\eps(X^n|E^n)$ as $n$ goes to infinity. It is well-known~\cite{Renner_PhDThesis_2005,KoenigRenner_SamplingMinEntropy_2011} that the optimal key rate is given by the conditional entropy of the state $\rho_{XE}$, i.e.,
\begin{align}
\label{eq:AEPkeyrate}
\lim_{n\to\infty}\frac{l^{\eps}(X^n|E^n)}{n} = H(X|E),
\end{align}
for all $0<\eps<1$.  
Hence, $H(X|E)$ has the operational interpretation in the context of privacy amplification of being the optimal rate at which secret keys can be generated which are uniformly distributed and independent of Eve's system in the asymptotic limit.

From \eqref{eq:AEPkeyrate} we can already infer the strong converse property of privacy amplification: If Alice and Bob apply a hash function $f:\X^n\to\Z_n$ with $|\Z_n| = \left \lfloor{2^{nR}}\right \rfloor$ and  the {\emph{privacy amplification rate}, $R$, satisfies $R> H(X|E)$, then the conversion error necessarily tends to one in the asymptotic limit.

\section{Main results}

Below we summarize the main results of our paper.

\begin{itemize}
\item
\textbf{Strong converse exponent of privacy amplification:} We prove that the conversion error for a privacy amplification scheme with rate $R>H(X|E)$
converges to $1$ exponentially fast, in the asymptotic limit. Moreover, we provide a lower bound on the corresponding strong converse exponent\footnote{Note that bounds on the strong converse exponent for privacy amplification were also obtained in~\cite{LeditzkyWildeDatta_StrongConverseRenyiEnt_2016} under a different choice of the figure of merit. Moreover, note that in a concurrent and independent work \cite{ShenGaoCheng_StrongConvPrivAmpl_2022}, a comparable bound for the average conversion error (over strongly 2-universal hash functions) in trace distance is obtained. See Section~\ref{sec:PrivacyAmplification} for more details.}.
In particular, if $\Z_n$ is the output classical register with $|\Z_n| = \left\lfloor{}2^{ nR}\right\rfloor$, and $\Delta_P(\X^n\to\Z_n)$ is the {\emph{minimal conversion error}}, i.e.~smallest attainable (purified) distance of the final state from the desired decoupled state, we show that the strong converse rate, $sc_P(R)$, satisfies:
\begin{align}
\label{eq:StrongConverseRate}
sc_P(R) &\coloneqq\liminf_{n\to\infty}\frac{-\log\left(1-\Delta_P(\X^n \to \Z_n)\right)}{n} \ge \sup_{0<\alpha<1}(1-\alpha)\left(R -H_\alpha(X|E)\right),
\end{align}
with  $H_\alpha(X|E) := \frac{1}{1-\alpha}\log\Tr(\rho_{XE}^\alpha(\1_X\otimes\rho_E)^{1-\alpha})$ being the $\alpha$-conditional R\'enyi entropy.
\item
\textbf{Strong converse of secure communication:} Most of the literature on privacy amplification was concerned with establishing security proofs showing that if the distance to the decoupled state can be made small using a suitable hash function, the probability with which Eve can guess the generated key is also small. On the other hand, the implications of large distance to decoupled state, as is the case in the strong converse region discussed above, are less obvious.

In order to clarify in what sense large conversion error implies insecurity in the communication with the generated key, we consider Alice encrypting a message using the key, and evaluate the extra side information Eve needs to have in order to guess the message with certainty. \comment{Here, Alice is using a one-time pad encryption scheme in which she encodes her message $m\in\Z_n$ with a secret key $z\in\Z_n$ by \red{component wise addition} $m\oplus z$ and then publicly sends the encrypted message.} We consider the scenario in which Alice chooses her message from some subset $\cM_n\subset\Z_n$. Here the set $\cM_n$ is known by Eve and its size determines Eve's additional side information: Eve's uncertainty about Alice's message increases with the cardinality $|\cM_n|$.

We show below that whenever the minimal conversion error is strictly smaller than 1, Eve needs to have very strong additional side information about the set of messages, i.e. $|\cM_n|$ needs to be finite (uniformly in $n$) in order for Eve to be able to guess Alice's message with certainty. 

In contrast, in the strong converse region, in which the conversion error approaches 1 exponentially fast, Eve can infer the message even for sets of messages $\cM_n$ which grow exponentially in $n$.
In fact, we show that almost all\footnote{Here 'almost all' means that the proportion of subsets $\cM_n$ with size constraint $|\cM_n|\ll 2^{nsc_P(R)/2}$ for which the statement holds approaches 1 as goes to $n\to\infty.$ } sets with size $|\cM_n|\ll 2^{nsc_P(R)/2}$ are such that if Alice picks a message $M$ out of $\cM_n$, the probability with which Eve is able to guess $M$ correctly approaches 1 as $n\to\infty$. Moreover, this convergence can be shown to happen exponentially fast.

These results imply that any communication from Alice to Bob, using keys generated at a rate $R>H(X|E)$, is {\em{totally insecure}}, in the following sense: Eve, with very limited additional information about the set $\cM_n \subset \cX_n$ from which Alice chooses her messages, is able to perfectly infer her messages in the asymptotic limit.
\item
\textbf{Strong converse exponent for smoothed max-relative entropy:} As a technical result which is used in the proof of \eqref{eq:StrongConverseRate}, we show that the smoothing parameter in the smoothed max-relative entropy $D^{\eps}_{\max}(\rho\|\sigma)$ of two states $\rho,\sigma$ necessarily converges to 1 in the asymptotic limit, if one demands that the quotient $D^{\eps}_{\max}(\rho^{\otimes n}\|\sigma^{\otimes n})/n$ is less than the relative entropy $D(\rho\|\sigma)$. We call this the \emph{strong converse property of the smoothed max-relative entropy}.\comment{ From an operational point of view this can be seen as the strong converse in the dilution task of the resource theory of asymmetric distinguishability \cite{WangWilde_RTAD_2019}.} Moreover, we establish that this convergences happens exponentially fast and provide lower bounds on the corresponding strong converse exponent which can be shown to be tight in the case of commuting states.
This result can be seen as the corresponding converse statement of the recent result \cite[Theorem 6]{LiYao_ExponentSmoothMaxEntr_2021} of Li, Yao and Hayashi. For the proof of this result we employ the quantum Hoeffding bound \cite{Hayashi_Hoeffding(ErrorExponent)_2007,Nagaoka_ConverseHoeffding_2006,AudenaertNussbaum_AsymptoticErrorRates_2008} from binary quantum hypothesis testing.
\end{itemize}

\section{Preliminaries} 
\label{sec:Preliminaries}
\subsection{Purified distance and generalised trace distance}
Let $\cH$ be a finite dimensional Hilbert space. We will denote the set of positive semi-definite operators on $\cH$ by $\cP(\cH)$ and moreover by $\cS_{\le}(\cH) = \{\rho\in \cP(\cH)|\, \Tr(\rho)\le 1\}$ and $\cS(\cH) = \{\rho\in \cP(\cH)|\, \Tr(\rho)= 1\}$ the set of sub-normalised and normalised states respectively. For  $\rho,\sigma\in\cS_\le(\cH)$ the generalised trace distance is defined as
$$\cD(\rho,\sigma) := \frac{1}{2}\|\rho-\sigma\|_1 + \frac{1}{2}|\Tr(\rho-\sigma)|.$$  Note that it can be expressed as
\begin{align}
	\label{eq:GenTrace}
	\cD(\rho,\sigma) = \max_{0\le\Lambda\le\1}|\Tr\left(\Lambda(\rho-\sigma)\right)|.
	\end{align}
Moreover, the purified distance \cite{TomamichelColbeckRenner_DualitySmoothMinMax_2010} is defined as 
\begin{align*}
P(\rho,\sigma) := \sqrt{1-F(\rho,\sigma)^2},
\end{align*}
where 
\begin{align*}
 F(\rho,\sigma) := \Tr\left(\sqrt{\sqrt{\rho}\sigma\sqrt{\rho}}\right) + \sqrt{(1-\Tr(\rho))(1-\Tr(\sigma))},
\end{align*}
is the generalised fidelity.
Both the generalised trace distance as well as the purified distance are metrics on the set $\cS_{\le}(\cH)$. From the Fuchs-van de Graaf inequalities \cite{FuchsvandeGraaf_Inequality_1999}, the relations between the generalised trace distance and the purified distance follow \cite[Lemma 6]{TomamichelColbeckRenner_DualitySmoothMinMax_2010}:
\begin{align}
\label{eq:GenPuriBounds}
\cD(\rho,\sigma) \le P(\rho,\sigma) \le \sqrt{2\cD(\rho,\sigma)}.
\end{align}
\subsection{Quantum relative entropies}
The quantum relative entropy \cite{Umegaki_ConditionalExpectation_1962} of $\rho\in\cS_{\le}(\cH) $ with respect to $\sigma\in\cP(\cH)$ is defined as
\begin{align*}
D(\rho\|\sigma) := \Tr\left(\rho\left(\log\rho-\log\sigma\right)\right).
\end{align*}
Moreover, for $\alpha \in(0,1)\cup (1,\infty)$ the \emph{Petz R\'enyi relative entropy} \cite{Petz_Quasi-entropies_1986} of order $\alpha$ is defined as
\begin{align*}
D_\alpha(\rho\|\sigma) := \frac{1}{\alpha-1}\log\Tr\left(\rho^\alpha\sigma^{1-\alpha}\right)
\end{align*}
and the \emph{sandwiched R\'enyi relative entropy} \cite{Muller-lennert_QuantumSandwiched_2013,Wilde_StrongConverseSandwiched_2014} of order $\alpha$ as
\begin{align*}
\widetilde D_\alpha(\rho\|\sigma) \coloneqq \frac{1}{\alpha-1}\log\Tr\left(\sigma^{\frac{1-\alpha}{2}}\rho\sigma^{\frac{1-\alpha}{2}}\right).
\end{align*}
The \emph{max-relative entropy} \cite{Datta_MinMaxRelativeEntr_2009} is defined as
\begin{align*}
D_{\max}(\rho\|\sigma) := \inf\left\{\lambda\,| \,\rho\le 2^\lambda \sigma\right\}.
\end{align*}
These relative entropies fulfill the following relations
\begin{align*}
\widetilde D_\alpha(\rho\|\sigma) \le  D_\alpha(\rho\|\sigma),\quad \lim_{\alpha\to 1 }\widetilde D_\alpha(\rho\|\sigma)= \lim_{\alpha\to 1 }  D_\alpha(\rho\|\sigma) = D(\rho\|\sigma),\quad  \lim_{\alpha\to \infty }\widetilde D_\alpha(\rho\|\sigma) = D_{\max}(\rho,\sigma).
\end{align*}
Moreover, both the Petz- and sandwiched R\'enyi relative entropies are monotonically increasing in the parameter $\alpha$. 

For $\eps\ge 0$ and $d$ denoting a metric on $\cP(\cH)$ we define the \emph{smoothed max-relative entropy} \cite{Datta_MinMaxRelativeEntr_2009} to be
\begin{align*}
D^{\eps,d}_{\max}(\rho\|\sigma) := \inf_{\widetilde\rho\in\cB^{d}_{\eps}(\rho)} D_{\max}(\widetilde\rho\|\sigma).
\end{align*}
Here, we have denoted the ball of sub-normalised states with radius $\eps$ around $\rho$ by $\cB^{d}_\eps(\rho) = \{\widetilde\rho\in\cS_{\le}(\cH) |\, d(\widetilde\rho,\rho) \le\eps\}.$ We will consider the metric $d$ to be either the generalised trace distance or the purified distance, i.e. $d\in\{\cD,P\}.$

Using the above definitions of the various relative entropies, we can define the the corresponding conditional entropies for a bipartite state $\rho_{AB} \in \cS(\cH_A\otimes\cH_B)$ as follows
\begin{align*}
H(A|B) := -D(\rho_{AB}\| \1_A\otimes \rho_B),\quad &H_\alpha(A|B) := -D_{\alpha}(\rho_{AB}\| \1_A\otimes \rho_B),\\ \widetilde H_\alpha(A|B) := -\widetilde D_{\alpha}(\rho_{AB}\| \1_A\otimes \rho_B),\quad
&H^{\eps,d}_{\min}(A|B) := -D_{\max}^{\eps,d}(\rho_{AB}\| \1_A\otimes \rho_B).
\end{align*}
\subsection{The Hoeffding bound of binary quantum hypothesis testing}
\label{sec:Hoeffding}

In binary hypothesis testing the task is to discriminate between two quantum states $\rho$ and $\sigma$ using a measurement, i.e. a POVM $\{\Lambda,\1-\Lambda\}$ with $0\le\Lambda\le\1.$ Here, we interpret $\Tr((\1-\Lambda)\rho)$ as the \emph{type I error probability}, i.e. the probability that $\rho$ was wrongfully infered to be $\sigma$, and $\Tr(\Lambda\sigma)$ as the \emph{type II error probability}, i.e. the probability that $\sigma$ was wrongfully infered to be $\rho$. 

In the assymptotic i.i.d. setting with $n$ copies of the states available, assuming $\rho\neq \sigma$, both type I and type II error probabilities can be made exponentially small in $n$ choosing a suitable measurement.
Here, for $s\ge0$ the optimal optimal \emph{type I error exponent} under the constraint that the \emph{type II error exponent} is greater or equal to $s$ is given by
	\begin{align}
	\label{eq:HoeffdingDef}
	B(s|\rho\|\sigma) &\coloneqq \sup_{\substack{(\Lambda_n)_{n\in\N}\\0\le\Lambda_n\le\1}}\left\{\liminf_{n\to\infty}\frac{-\log\Big(\Tr\left((\1-\Lambda_n)\rho^{\otimes n}\right)\Big)}{n}\Bigg|\liminf_{n\to\infty}\frac{-\log\Big(\Tr\left(\Lambda_n\sigma^{\otimes n}\right)\Big)}{n} \ge s,\right\}.
	\end{align}
This quantity is called the \emph{quantum Hoeffding bound} and it has been proven for $s>0$ that $B(s|\rho\|\sigma)$ can be expressed using the Petz R\'enyi relative entropy \cite{Hayashi_Hoeffding(ErrorExponent)_2007,Nagaoka_ConverseHoeffding_2006,AudenaertNussbaum_AsymptoticErrorRates_2008}
	\begin{align}
	\label{eq:HoeffResult}
	B(s|\rho\|\sigma)&= \sup_{0\le\alpha\le 1} \frac{\alpha-1}{\alpha}\left(s - D_\alpha(\rho\|\sigma)\right).
	\end{align}
	For the proof of our Theorem~\ref{thm:StrongConverse} we will for $r\in\R$ fixed use the quantity
	\begin{align*}
	s_r := \sup_{0\le\alpha\le 1}\big(\alpha r -(\alpha-1)D_{\alpha}(\rho\|\sigma)\big),
	\end{align*}
	which can be easily seen to be the unique solution to the equation $B(s|\rho\|\sigma) = s - r.$
    It can be shown that the so-called the Neyman-Pearson test is optimal in \eqref{eq:HoeffdingDef}, which is sumarised in the following lemma  (see e.g. \cite[Theorem 1.4] {AudenaertMosonyi_QStateDiscriminationBounds_2012}\footnote{Note that in \cite{AudenaertMosonyi_QStateDiscriminationBounds_2012} the divergence $H_s(\rho\|\sigma)$ satisfies in our convention $H_s(\rho\|\sigma)=B(s|\sigma\|\rho).$} )
	\begin{lemma} [Asymptotics of Neyman-Pearson tests]\label{lem:NeymanPearson}
	Let $r\in\R$ and $s_r$ be such that $= B(s_r|\rho\|\sigma)= s_r-r.$ Denote by $T_n=\{\rho^{\otimes n}\le 2^{nr} \sigma^{\otimes n}\}$ the Neyman-Pearson test, i.e. the projector onto the non-negative subspace of $2^{nr} \sigma^{\otimes n}- \rho^{\otimes n}.$ Then
	\begin{align}
	\lim_{n\to\infty}\frac{-\log\Tr(T_n\rho^{\otimes n})}{n} = B(s_r|\rho\|\sigma), \quad \lim_{n\to\infty}\frac{-\log\Tr((\1-T_n)\sigma^{\otimes n})}{n} = s.
	\end{align}
	\end{lemma}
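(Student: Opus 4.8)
The statement asserts that both error probabilities attached to the Neyman--Pearson tests $(T_n)_n$ decay exponentially with the optimal exponents, so the plan is to bracket $\Tr(T_n\rho^{\otimes n})$ and $\Tr((\1-T_n)\sigma^{\otimes n})$ between matching exponential upper and lower bounds; the structural fact that makes both exponents fall out of a single one-parameter optimisation is the identity $B(s_r|\rho\|\sigma)=s_r-r$ built into the definition of $s_r$.

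For the upper bounds I would invoke the operator inequality $\tfrac12\big(\Tr A+\Tr B-\Tr|A-B|\big)\le\Tr(A^sB^{1-s})$, valid for $A,B\ge0$ and $0\le s\le1$ (the ``quantum Chernoff'' lemma of Audenaert et al.), whose left-hand side equals $\Tr((\1-\Pi)A)+\Tr(\Pi B)$ for the spectral projector $\Pi=\{A\ge B\}$. Taking $A=2^{nr}\sigma^{\otimes n}$, $B=\rho^{\otimes n}$ (so that $\Pi=T_n$) and discarding one of the two non-negative terms gives, for every $n$ and every $s\in[0,1]$,
\begin{align*}
\Tr(T_n\rho^{\otimes n})\le 2^{nrs}\big(\Tr(\sigma^s\rho^{1-s})\big)^n \qquad\text{and}\qquad \Tr((\1-T_n)\sigma^{\otimes n})\le 2^{n(s-1)r}\big(\Tr(\sigma^s\rho^{1-s})\big)^n.
\end{align*}
Substituting $\alpha=1-s$ and $\Tr(\sigma^s\rho^{1-s})=\Tr(\rho^\alpha\sigma^{1-\alpha})=2^{(\alpha-1)D_\alpha(\rho\|\sigma)}$ converts the right-hand sides into $2^{-n(1-\alpha)(D_\alpha(\rho\|\sigma)-r)}$ and $2^{-n(\alpha r-(\alpha-1)D_\alpha(\rho\|\sigma))}$; optimising over $\alpha\in[0,1]$ and using $\sup_\alpha(1-\alpha)(D_\alpha(\rho\|\sigma)-r)=s_r-r=B(s_r|\rho\|\sigma)$ together with the definition $s_r=\sup_\alpha(\alpha r-(\alpha-1)D_\alpha(\rho\|\sigma))$ yields $\Tr(T_n\rho^{\otimes n})\le 2^{-nB(s_r|\rho\|\sigma)}$ and $\Tr((\1-T_n)\sigma^{\otimes n})\le 2^{-ns_r}$ for all $n$, i.e.\ the lower bounds on the two exponents.

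For the matching lower bounds I would first observe, cheaply, that by the second estimate the test sequence $(\1-T_n)_n$ satisfies the type-II constraint in the definition \eqref{eq:HoeffdingDef} of the Hoeffding bound with $s=s_r$, so its type-I exponent is at most $B(s_r|\rho\|\sigma)$; combined with the upper bounds this already pins $\liminf_n-\tfrac1n\log\Tr(T_n\rho^{\otimes n})=B(s_r|\rho\|\sigma)$. What remains --- and is the hard part --- is to promote the two $\liminf$'s to genuine limits, equivalently to show that the Neyman--Pearson tests do not overshoot, $\Tr(T_n\rho^{\otimes n})\ge 2^{-n(B(s_r|\rho\|\sigma)+o(1))}$ and $\Tr((\1-T_n)\sigma^{\otimes n})\ge 2^{-n(s_r+o(1))}$. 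When $\rho$ and $\sigma$ commute these follow from Cram\'er's theorem applied to the i.i.d.\ log-likelihood-ratio observable $\log\rho-\log\sigma$, whose lower-tail rate at $r$ is exactly the $\sup_{0\le\alpha\le1}(1-\alpha)(D_\alpha(\rho\|\sigma)-r)=B(s_r|\rho\|\sigma)$ found above (and whose upper-tail rate under $\sigma$ is $s_r$). In general I would obtain them by a change-of-measure estimate, comparing $\rho^{\otimes n}$ and $\sigma^{\otimes n}$ on the range of $T_n$ with the suitably normalised tilted operator interpolating $\rho^{\alpha^*}$ and $\sigma^{1-\alpha^*}$, where $\alpha^*\in(0,1)$ is the maximiser defining $s_r$; this is the content of \cite[Theorem~1.4]{AudenaertMosonyi_QStateDiscriminationBounds_2012} (modulo the convention $H_s(\rho\|\sigma)=B(s|\sigma\|\rho)$ noted in the footnote). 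I expect this non-commutative lower bound, and the $\liminf\to\lim$ upgrade it secures, to be the main obstacle.
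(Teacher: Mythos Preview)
The paper does not prove Lemma~\ref{lem:NeymanPearson} at all; it is quoted from the literature, with a pointer to \cite[Theorem~1.4]{AudenaertMosonyi_QStateDiscriminationBounds_2012}. Your sketch is precisely the standard route to that result: the Audenaert et~al.\ operator inequality gives the one-shot upper bounds on both error probabilities (hence the lower bounds on both exponents), the feasibility of $(\1-T_n)_n$ in the variational definition~\eqref{eq:HoeffdingDef} pins the $\liminf$ of the type-I exponent to $B(s_r|\rho\|\sigma)$, and the remaining step --- showing that the Neyman--Pearson tests do not overshoot, so that the $\liminf$'s are genuine limits --- is exactly the tilting/change-of-measure argument you identify as the crux and for which you invoke the same Audenaert--Mosonyi theorem that the paper cites. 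In short, you have reconstructed the proof the paper outsourced; the approach is the same, and your analysis of where the difficulty lies is accurate.
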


\section{Strong converse exponent in smoothed max-relative entropy}

For $\rho\in\cS(\cH)$, $\sigma\in\cP(\cH),$ $r\in\R$ and $d\in\{\cD,P\}$ we define by $\eps^{d}(\rho\|\sigma,r) $ the optimal, i.e. smallest exponent such that the corresponding smoothed max-relative entropy is less or equal to $r$, i.e. as 
	\begin{equation}
	\label{eq:epsDef}
	\eps^{d}(\rho\|\sigma,r) = \inf\Big\{d(\rho,\widetilde\rho)\Big|\, \widetilde\rho\le 2^{r}\sigma,\, \widetilde\rho\in\cS_{\le}(\cH)\Big\} = \inf\Big\{\eps\ge 0\Big|\, D_{\max}^{\eps,d}(\rho\|\sigma)\le r\Big\}.
	\end{equation}
    In their recent work \cite{LiYao_ExponentSmoothMaxEntr_2021}, Li, Yao and Hayashi proved that for $r> D(\rho\|\sigma)$ the optimal exponent in the i.i.d. setting, $\eps^{P}(\rho^{\otimes n}\|\sigma^{\otimes n}, nr)$ converges exponentially fast to $0$ as $n\to\infty.$ Moreover, they found that the corresponding exponential rate is given by
    \begin{equation}
    \lim_{n\to\infty}\frac{-\log\eps^{P}(\rho^{\otimes n}\|\sigma^{\otimes n}, nr)}{n} = \sup_{\alpha>1}\frac{\alpha-1}{2}\left(r- \widetilde D_\alpha(\rho\|\sigma)\right).
    \end{equation}

	Here, we are interested in the behaviour of the exponents $\eps^{d}(\rho^{\otimes n}\|\sigma^{\otimes n}, nr)$ in the region $r<D(\rho\|\sigma).$ We will see that in this case the exponents converge exponentially fast to 1 and by that establishing the corresponding strong converse property. Moreover, we provide a lower bound on the exponential rate of convergence.
	\begin{theorem} \label{thm:StrongConverse} 
For $\rho\in\cS(\cH)$, $\sigma\in\cP(\cH)$ and $r\in\R$ we have
		\begin{align}
		\label{eq:StrongConverse}
		\liminf_{n\to\infty}\frac{-\log(1-\eps^{\cD}(\rho^{\otimes n}\|\sigma^{\otimes n},nr))}{n} &\ge \sup_{0\le\alpha\le 1} (\alpha-1)\left(r - D_{\alpha}(\rho\|\sigma)\right),\\	\liminf_{n\to\infty}\frac{-\log(1-\eps^{P}(\rho^{\otimes n}\|\sigma^{\otimes n},nr))}{n} &\ge \sup_{0\le\alpha\le 1} (\alpha-1)\left(r - D_{\alpha}(\rho\|\sigma)\right).\label{eq:PuriExpStrong}
		\end{align}	
	Moreover, in the case in which $\rho$ and $\sigma$ commute, we have
	\begin{align}
	   \label{eq:StrongConverseEqual}
		\lim_{n\to\infty}\frac{-\log(1-\eps^{\cD}(\rho^{\otimes n}\|\sigma^{\otimes n},nr))}{n} &= \sup_{0\le\alpha\le 1} (\alpha-1)\left(r - D_{\alpha}(\rho\|\sigma)\right).
	\end{align}
	\end{theorem}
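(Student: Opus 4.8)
The plan is to bound $\eps^{\cD}(\rho^{\otimes n}\|\sigma^{\otimes n},nr)$ from below by a quantity that does not depend on the smoothing state, and then to read off the asymptotics of that quantity from the Neyman-Pearson test of Lemma~\ref{lem:NeymanPearson}. Rescaling $\sigma\mapsto\sigma/\Tr\sigma$ shifts both $r$ and every $D_\alpha(\rho\|\sigma)$ by the same amount $\log\Tr\sigma$ and hence leaves all three inequalities invariant, so we may assume $\sigma\in\cS(\cH)$; we may also assume $r<D(\rho\|\sigma)$, since otherwise the supremum on the right-hand side of \eqref{eq:StrongConverse} equals $0$ and there is nothing to prove.

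First, I would fix $n$ and any $\widetilde\rho\in\cS_\le(\cH^{\otimes n})$ with $\widetilde\rho\le 2^{nr}\sigma^{\otimes n}$, and let $T_n=\{\rho^{\otimes n}\le 2^{nr}\sigma^{\otimes n}\}$ be the Neyman-Pearson test. Applying the dual formula \eqref{eq:GenTrace} for the generalised trace distance with the effect $\1-T_n$, and using that $\1-T_n\ge 0$ together with $\widetilde\rho\le 2^{nr}\sigma^{\otimes n}$ gives $\Tr((\1-T_n)\widetilde\rho)\le 2^{nr}\Tr((\1-T_n)\sigma^{\otimes n})$, we obtain
\begin{align*}
\cD(\rho^{\otimes n},\widetilde\rho)\;\ge\;\Tr\big((\1-T_n)\rho^{\otimes n}\big)-2^{nr}\,\Tr\big((\1-T_n)\sigma^{\otimes n}\big).
\end{align*}
The right-hand side is independent of $\widetilde\rho$, so by \eqref{eq:epsDef} it is also a lower bound for $\eps^{\cD}(\rho^{\otimes n}\|\sigma^{\otimes n},nr)$; equivalently,
\begin{align*}
1-\eps^{\cD}(\rho^{\otimes n}\|\sigma^{\otimes n},nr)\;\le\;\Tr\big(T_n\rho^{\otimes n}\big)+2^{nr}\,\Tr\big((\1-T_n)\sigma^{\otimes n}\big).
\end{align*}

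Next, I would invoke Lemma~\ref{lem:NeymanPearson}: with $s_r$ as in Section~\ref{sec:Hoeffding} it gives $\Tr(T_n\rho^{\otimes n})=2^{-n(B(s_r|\rho\|\sigma)+o(1))}$ and $\Tr((\1-T_n)\sigma^{\otimes n})=2^{-n(s_r+o(1))}$. Since $s_r$ satisfies $B(s_r|\rho\|\sigma)=s_r-r$, the second term above equals $2^{nr}\cdot 2^{-n(s_r+o(1))}=2^{-n(B(s_r|\rho\|\sigma)+o(1))}$ as well, whence $1-\eps^{\cD}(\rho^{\otimes n}\|\sigma^{\otimes n},nr)\le 2\cdot 2^{-n(B(s_r|\rho\|\sigma)+o(1))}$ and therefore $\liminf_{n\to\infty}\frac{-\log(1-\eps^{\cD}(\rho^{\otimes n}\|\sigma^{\otimes n},nr))}{n}\ge B(s_r|\rho\|\sigma)$. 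A one-line manipulation of the definition of $s_r$,
\begin{align*}
B(s_r|\rho\|\sigma)=s_r-r=\sup_{0\le\alpha\le 1}\big(\alpha r-(\alpha-1)D_\alpha(\rho\|\sigma)\big)-r=\sup_{0\le\alpha\le 1}(\alpha-1)\big(r-D_\alpha(\rho\|\sigma)\big),
\end{align*}
then yields \eqref{eq:StrongConverse}. The purified-distance bound \eqref{eq:PuriExpStrong} follows at once: by \eqref{eq:GenPuriBounds} we have $\cD\le P$, so the infimum defining $\eps^{\cD}$ in \eqref{eq:epsDef} is over the same feasible set as the one defining $\eps^{P}$ but with a pointwise smaller objective; hence $\eps^{\cD}\le\eps^{P}$, so $1-\eps^{P}\le 1-\eps^{\cD}$.

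Finally, for the equality \eqref{eq:StrongConverseEqual} when $[\rho,\sigma]=0$, I would exhibit a near-optimal smoothing state. The operator $\widetilde\rho:=T_n\rho^{\otimes n}$ is then positive semi-definite, satisfies $\widetilde\rho\le\rho^{\otimes n}$ (so it is sub-normalised) and $\widetilde\rho\le 2^{nr}\sigma^{\otimes n}$ (checked eigenvalue-wise using $[\rho,\sigma]=0$ and the definition of $T_n$). Since $\rho^{\otimes n}-\widetilde\rho=(\1-T_n)\rho^{\otimes n}\ge 0$, a direct computation gives $\cD(\rho^{\otimes n},\widetilde\rho)=\Tr((\1-T_n)\rho^{\otimes n})=1-\Tr(T_n\rho^{\otimes n})$, so $1-\eps^{\cD}(\rho^{\otimes n}\|\sigma^{\otimes n},nr)\ge\Tr(T_n\rho^{\otimes n})=2^{-n(B(s_r|\rho\|\sigma)+o(1))}$ by Lemma~\ref{lem:NeymanPearson}; this is the matching upper bound for the exponent and, together with \eqref{eq:StrongConverse}, gives \eqref{eq:StrongConverseEqual}. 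The one genuinely delicate point is the control of the term $2^{nr}\Tr((\1-T_n)\sigma^{\otimes n})$: the crude estimate $\Tr((\1-T_n)\sigma^{\otimes n})\le 2^{-nr}$ (immediate from the definition of $T_n$) only shows it is $O(1)$ and is useless, so one really needs the sharp type-II error exponent $s_r>r$ of the Neyman-Pearson test, i.e. the full strength of the quantum Hoeffding bound; the remaining degenerate cases (for instance $\supp\rho\not\subseteq\supp\sigma$, in which the supremum is attained at $\alpha=1$) cause no difficulty.
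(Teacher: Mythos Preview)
Your proof is correct and follows essentially the same route as the paper: bound $1-\cD(\rho^{\otimes n},\widetilde\rho)$ above by a sum of type~I and (rescaled) type~II errors via the constraint $\widetilde\rho\le 2^{nr}\sigma^{\otimes n}$, then use the quantum Hoeffding bound to control both terms. The only noticeable difference is that you plug in the Neyman--Pearson test $\1-T_n$ directly and invoke Lemma~\ref{lem:NeymanPearson} for its exact exponents, whereas the paper keeps the minimisation over all tests $\Lambda_n$ and then chooses an abstract $\eps$-optimal sequence $\Lambda_n(s_r,\eps)$ from the definition of the Hoeffding bound, letting $\eps\downarrow 0$ at the end; your version is slightly more streamlined since it avoids that extra $\eps$ limit, but the idea is identical. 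For the commuting case your $\widetilde\rho=T_n\rho^{\otimes n}$ coincides with the paper's $T_n\rho^{\otimes n}T_n$ (as $T_n$ commutes with $\rho^{\otimes n}$), and the rest of the argument is the same.
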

Note that indeed $\sup_{0\le\alpha\le 1} (\alpha-1)\left(r - D_{\alpha}(\rho\|\sigma)\right)$ is positive if and only if $r<D(\rho\|\sigma)$ which hence establishes the strong converse property of the exponent of the smoothed max-relative entropy. \\

\begin{proof}[Proof of Theorem~\ref{thm:StrongConverse}]
We first show the achievability bound \eqref{eq:StrongConverse}. Note that  \eqref{eq:PuriExpStrong}, which involves the purified distance, then follows immediately since using \eqref{eq:GenPuriBounds} we have the inequality
\begin{align*}
\eps^\cD(\rho\|\sigma,r) \le \eps^P(\rho\|\sigma,r).
\end{align*}
	Let now $\widetilde\rho_n\in\cS_{\le}(\cH^{\otimes n})$ be such that $\widetilde\rho_n\le2^{nr}\sigma^{\otimes n}$.
	Hence, by \eqref{eq:GenTrace}
	\begin{align}
	\label{eq:pERRbound}
	1- \cD(\rho^{\otimes n},\widetilde\rho_n) &=\nn \min_{0\le\Lambda_n\le\1}\left(1- |\Tr(\Lambda_n(\rho^{\otimes n}-\widetilde\rho_n))|\right) \\&\nn=  \min_{0\le\Lambda_n\le\1}\min\left\{1-\Tr(\Lambda_n(\rho^{\otimes n}-\widetilde\rho_n)),\,1-\Tr(\Lambda_n(\widetilde\rho_n-\rho^{\otimes n}))\right\}\\&=\nn \min_{0\le\Lambda_n\le\1}\left(\Tr\left(\rho^{\otimes n}(\1-\Lambda_n)\right) + \Tr\left(\widetilde\rho_n\Lambda_n\right) \right) \\&\le \min_{0\le\Lambda_n\le\1}\left(\Tr\left(\rho^{\otimes n}(\1-\Lambda_n)\right) + 2^{nr}\Tr\left(\sigma^{\otimes n}\Lambda_n\right) \right).
	\end{align}
	Now, by \eqref{eq:HoeffResult} there exists for all $s,\eps>0$ a sequence $0\le\Lambda_n(s,\eps)\le\1$ such that
	\begin{align*}
	\liminf_{n\to\infty}\frac{-\log(\Tr\left(\sigma^{\otimes n}\Lambda_n(s,\eps)\right))}{n} \ge s, \quad\quad	\liminf_{n\to\infty}\frac{-\log(\Tr\left(\rho^{\otimes}(\1-\Lambda_n(s,\eps))\right))}{n} \ge B(s|\rho\|\sigma) -\eps .
	\end{align*}
	Choosing $s= s_r$ where 
	\begin{align*}
	s_r := \sup_{0\le\alpha\le 1}\big(\alpha r -(\alpha-1)D_{\alpha}(\rho\|\sigma)\big)
	\end{align*}
	and noting that $s_r$ is the unique solution to the equation $B(s|\rho\|\sigma) = s - r$, we get for all $\eps>0$
	\begin{align*}
	\liminf_{n\to\infty}\frac{-\log(2^{nr}\Tr\left(\sigma^{\otimes n}\Lambda_n(s_r,\eps)\right))}{n} \ge s_r - r, \quad\quad	\liminf_{n\to\infty}\frac{-\log(\Tr\left(\rho^{\otimes}(\1-\Lambda_n(s_r,\eps))\right))}{n} \ge s_r - r -\eps.
	\end{align*}
	Therefore, using now \eqref{eq:pERRbound} we get
	\begin{align*}
	\liminf_{n\to\infty}\frac{-\log\left(1-\cD(\rho^{\otimes n},\widetilde\rho_n)\right)}{n} &\ge\liminf_{n\to\infty} \frac{-\log\big(\Tr\left(\rho^{\otimes n}(\1-\Lambda_n(s_r,\eps))\big) + 2^{nr}\Tr\left(\sigma^{\otimes n}\Lambda_n(s_r,\eps)\right) \right)}{n} \\&\ge s_r - r -\eps.
	\end{align*}
	Hence, by definition
	\begin{align*}
	\liminf_{n\to\infty}\frac{-\log(1-\eps^{\cD}(\rho^{\otimes n}\|\sigma^{\otimes n},nr))}{n} \ge s_r -r -\eps = \sup_{0\le\alpha\le 1} (\alpha-1)\left(r - D_{\alpha}(\rho\|\sigma)\right) - \eps,
	\end{align*}
	which finishes the proof of \eqref{eq:StrongConverse} since $\eps>0$ was arbitrary.

	We now consider $\rho$ and $\sigma$ to be commute and establish the converse in \eqref{eq:StrongConverseEqual}. For that we define the sub-normalised states $\widetilde\rho_n= T_n\rho^{\otimes n}T_n$, where $T_n = \{\rho^{\otimes n}\le 2^{nr}\sigma^{\otimes n}\}$ denotes the Neyman-Pearson test. Using that $\rho$ and $\sigma$ commute we see by definition  $\widetilde\rho_n\le 2^{nr}\sigma^{\otimes n}.$ Moreover, the generalised trace distance can be written as
	\begin{align*}
	\cD(\rho^{\otimes n},\widetilde\rho_n)& = \frac{1}{2}\left\|\rho^{\otimes n}- \widetilde\rho_n\right\|_1 + \frac{1}{2}\left|\Tr(\rho^{\otimes n} )- \Tr(\widetilde\rho_n)\right| \\&= \frac{1}{2} \|(\1-T_n)\rho^{\otimes n}(\1- T_n)\|_1  + \frac{1}{2}\left(1- \Tr(T_n\rho^{\otimes n})\right) \\&= 1- \Tr(T_n\rho^{\otimes n}).
	\end{align*}
	Here we have used for the first equality that since $\rho^{\otimes n}$ commutes with $\sigma^{\otimes n}$ and therefore also with $T_n$, the cross terms vanish, i.e. $T_n\rho^{\otimes n}(\1-T_n)= (\1-T_n)\rho^{\otimes n}T_n= 0.$
	Using now Lemma~\ref{lem:NeymanPearson} we see that 
	\begin{align*}
	\lim_{n\to\infty}\frac{-\log(1-\cD(\rho^{\otimes n},\widetilde\rho_n))}{n} = B(s_r|\rho\|\sigma) = \sup_{0\le\alpha\le 1}(\alpha-1)(r-D_{\alpha}(\rho\|\sigma)),
	\end{align*}
	and therefore
		\begin{align*}
	\limsup_{n\to\infty}\frac{-\log(1-\eps^{\cD}(\rho^{\otimes n}\|\sigma^{\otimes n}, nr)}{n} \le \sup_{0\le\alpha\le 1}(\alpha-1)(r-D_{\alpha}(\rho\|\sigma),
	\end{align*}
	which finishes the proof.

	\comment{For the converse we use the inequality 
	\begin{align*}
	D^{\eps}_{\max}(\rho\|\sigma) \le D^{1-\eps^2}_h(\rho\|\sigma) - \log(1-\eps^2)
	\end{align*}
	from \cite[Theorem 4]{AnshuBerta_MinMaxApproach_2019}. Hence,
	\begin{align}
	\label{eq:EpsForMaxandHypo}
	 \eps(\rho^{\otimes n}\|\sigma^{\otimes}| nr) = \inf\left\{\eps \Big| D_{\max}^\eps(\rho^{\otimes n}\|\sigma^{\otimes n}) \le nr\right\} \le \inf\left\{\eps \Big| D_{h}^{1-\eps^2}(\rho^{\otimes n}\|\sigma^{\otimes n}) - \log(1-\eps^2) \le nr\right\}.
	\end{align}
	Assume now \begin{align*}
	\limsup_{n\to\infty}\frac{ -\log(1- \eps(\rho^{\otimes n}\|\sigma^{\otimes}| nr) )}{n} \ge s \ge 0.
	\end{align*}
	By \eqref{eq:EpsForMaxandHypo} we therefore get
	\begin{align*}
	\sup_{\alpha}\frac{\alpha -1}{\alpha}\left(s-D_\alpha(\sigma\|\rho)\right) + s=B(s|\sigma\|\rho) + s = \liminf_{n\to\infty} \frac{D_h^{2^{-ns}}\left(\rho^{\otimes n}\|\sigma^{\otimes n}\right)}{n} + s \le r
	\end{align*}}
\end{proof}

\section{Strong converse rates of privacy amplification against a quantum adversary}
\label{sec:PrivacyAmplification}

The initial state of a privacy amplification protocol is given by the classical-quantum state
\begin{align}
\rho_{XE} = \sum_{x\in\X}p_x \kb{x}\otimes\rho_E^{x},
\end{align}
with the classical system $X$ belonging to Alice and Bob and the quantum system $E$ belonging to Eve.

 The objective of Alice and Bob is to apply a hash function $f:\X \to \Z$ to decouple their part of the system from Eve's system, the latter playing the role of quantum side information. Their operation results in the state
\begin{align}
\label{eq:HashedState}
    \rho^f_{ZE} =  \sum_{z\in \Z} \left( \kb{z} \otimes \sum_{x \in \cX:\atop{x \in f^{-1}(z)}} p_x \rho_E^x\right).
\end{align}
The minimal errors in this conversion, measured in trace distance and purified distance, respectively, are given by
\begin{align*}
\Delta_1(\X \to \Z) &= \min_{f}\frac{1}{2}\left\|\rho^{f}_{\Z E} - \frac{\1_\Z}{|\Z|}\otimes \rho_E\right\|_1, \\
\Delta_P(\X \to \Z) &= \min_{f}P\left( \rho^f_{ZE},\frac{\1_\Z}{|\Z|} \otimes \rho_E\right),
\end{align*}
where the minimum is over all functions $f:\X\to\Z.$ The \emph{optimal key length} which can be distilled from a c-q state $\rho_{XE}$ in that manner with $\eps\ge 0$ error measured in metric $d\in\{\cD,P\}$ is given by 
 
\begin{align*}
l^{\eps,d}(X|E) = \sup\left\{ \log|\Z| \,\Big|\,\Delta_d(\X\to\Z)\le \eps\right\}.
\end{align*}
Note that $l^{\eps,P}(X|E)$ is essentially given by the smoothed conditional min-entropy as we have for every $0<\eta\le \eps\le1$ the relation \cite[Theorem 8]{TomamichelHayashi_Hiearachy(2ndOrderExpansions)_2014} 
\begin{equation}
\label{eq:KeyLengthMinEnt}
H^{\eps,P}_{\min}(X|E) \ge l^{\eps,P}(X|E) \ge H^{\eps-\eta,P}_{\min}(X|E) - \log\left(1/\eta^4\right)- 3. \footnote{Note that in \cite{TomamichelHayashi_Hiearachy(2ndOrderExpansions)_2014} the result is proven for a different version of the smoothed conditional min-entropy, namely, $H_{\min}^{\eps,\uparrow}(X|E) := \sup_{\sigma_{E}\in\cS(\cH_E)} - D^{\eps,P}_{\max}(\rho_{XE}\|\1_X\otimes\sigma_E).$ Since clearly $H_{\min}^{\eps,\uparrow}(X|E) \ge H^{\eps,P}_{\min}(A|B)$, their lower bound on $l^{\eps,P}(X|E)$ implies the one in \eqref{eq:KeyLengthMinEnt}. Moreover, using the result \cite[Proposition 10]{LiYao_ExponentSmoothMaxEntr_2021} and the same proof as in~\cite[Theorem 8]{TomamichelHayashi_Hiearachy(2ndOrderExpansions)_2014}, the upper bound on $l^{\eps,P}(X|E)$ in \eqref{eq:KeyLengthMinEnt} also follows.}
\end{equation}

Let us go to the $n$-copy setting in which Alice and Bob possess strings $x^{(n)}=(x_1, \ldots, x_n) \in \cX^n$ with the $x_i$'s being values taken by a sequence of i.i.d.~random variables with common p.m.f.~$p_x$, $x \in \cX$. \comment{The strings are of course not completely private due to the quantum side information that Eve possesses.} The initial state of the privacy amplification process is in this case:
\begin{align}
    \rho_{X^nE^n} := \sum_{x^{(n)} \in \cX^n} p_{x^{(n)}} \ket{x^{(n)}}\bra{x^{(n)}}\otimes  \rho_{E^n}^{x^{(n)}},
\end{align}
and the desired final state is $\frac{\1_{\Z_n}}{|\Z_n|} \otimes \rho_E^{\otimes n}$.

For a  privacy amplification rate $R\ge0$ we consider $\Z_n$ to be such that $|\Z_n| = \left \lfloor{2^{nR}}\right \rfloor$. 
\comment{$\cF_n(R)$ denote a 2-universal family of hash functions from $\cX_n$ to $\Z_n$ of {\em{extraction rate}} $R$:
\begin{align}
    \cF_n(R) :=\left\{ f_n:  \cX_n \to \Z_n \,|\, | \Z_n|= 2^{nR} \,P \left(F_n(x^{(n)}) = F_n({\tilde{x}}^{(n)})\right) \leq \frac{1}{|\Z_n|}, \,\forall\, {{x}}^{(n)}, {\tilde{x}}^{(n)} \in \cX^n\right\},
\end{align}
where $F_n$ denotes a random choice (selected uniformly at random) of $f_n \in \cF_n(R)$.
Alice and Bob apply a 2-universal hash function from $\cF_n(R)$ to the strings in their possession. The final state is given by:
\begin{align}\label{final}
    \rho^{f_n}_{Z_nE^n} := \sum_{z^{(n)}\in \Z_n} \left( \ket{z^{(n)}}\bra{z^{(n)}} \otimes \sum_{x^{(n)} \in \cX^n:\atop{x^{(n)} \in f_n^{-1}(z^{(n)})}} p_{x^{(n)}} \rho_E^{x^{(n)}}\right).
\end{align}}
The next theorem shows that if the randomness extraction rate in a transformation from $\X^n$ to $\Z_n$ is larger than the conditional entropy $H(X|E)$, the distance of the corresponding decoupled state to any possible $\rho^{f_n}_{Z_nE^n}$, defined analogously to \eqref{eq:HashedState}, goes exponentially fast to 1 as $n$ goes to infinity. By that it provides the strong converse of privacy amplification. Moreover, it provides bounds on the corresponding strong converse rates.

\begin{theorem}
\label{thm:PrivStrong}
Let $\rho_{XE}$ be a c-q state, $R\ge 0$ and $\Z_n$ with $|\Z_n| = \left \lfloor{2^{nR}}\right \rfloor.$ Then we have
\begin{align}
\label{eq:PrivStrongTrace}\liminf_{n\to\infty} \frac{-\log\left(1- \Delta_1(\X^n\to\Z_n)\right)}{n} &\ge \sup_{0 \le \alpha \le 1} \frac{(1- \alpha )}{2} (R-H_\alpha(X|E)), \\ \label{eq:PrivPuri}
\liminf_{n\to\infty} \frac{-\log\left(1- \Delta_P(\X^n\to\Z_n)\right)}{n} &\ge \sup_{0 \le \alpha \le 1} (1- \alpha )  (R-H_\alpha(X|E)).
\end{align}

\end{theorem}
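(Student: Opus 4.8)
The plan is to reduce Theorem~\ref{thm:PrivStrong} to Theorem~\ref{thm:StrongConverse}. First I would prove the purified-distance bound \eqref{eq:PrivPuri} by showing that $\Delta_P(\X^n\to\Z_n)$ is lower bounded by the smoothing parameter $\eps^{P}$ of the smoothed max-relative entropy at the right argument, and then quote Theorem~\ref{thm:StrongConverse}. The trace-distance bound \eqref{eq:PrivStrongTrace} would then follow from \eqref{eq:PrivPuri} via a Fuchs--van de Graaf estimate, whose square root is precisely what produces the extra factor $\tfrac12$.

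For \eqref{eq:PrivPuri}: if $\Delta_P(\X^n\to\Z_n)\le\eps$ then $\log|\Z_n|\le l^{\eps,P}(X^n|E^n)$ by the definition of $l^{\eps,P}$, and the (additive-constant-free) upper bound in \eqref{eq:KeyLengthMinEnt} gives $l^{\eps,P}(X^n|E^n)\le H^{\eps,P}_{\min}(X^n|E^n)=-D^{\eps,P}_{\max}\big(\rho_{XE}^{\otimes n}\,\big\|\,(\1_X\otimes\rho_E)^{\otimes n}\big)$, where I use $\rho_{X^nE^n}=\rho_{XE}^{\otimes n}$ and $\1_{X^n}\otimes\rho_E^{\otimes n}=(\1_X\otimes\rho_E)^{\otimes n}$. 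Hence $D^{\eps,P}_{\max}\big(\rho_{XE}^{\otimes n}\,\big\|\,(\1_X\otimes\rho_E)^{\otimes n}\big)\le-\log|\Z_n|$, and since this holds for every $\eps\ge\Delta_P(\X^n\to\Z_n)$, the definition of $\eps^{P}$ yields $\Delta_P(\X^n\to\Z_n)\ge\eps^{P}\big(\rho_{XE}^{\otimes n}\,\big\|\,(\1_X\otimes\rho_E)^{\otimes n},-\log|\Z_n|\big)$. Because $nR-1\le\log|\Z_n|\le nR$, for any $\delta>0$ and all large $n$ one has $-\log|\Z_n|\le n(-R+\delta)$; as $r\mapsto\eps^{P}(\cdot\|\cdot,r)$ is non-increasing, this gives $\Delta_P(\X^n\to\Z_n)\ge\eps^{P}\big(\rho_{XE}^{\otimes n}\,\big\|\,(\1_X\otimes\rho_E)^{\otimes n},n(-R+\delta)\big)$. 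Now I apply Theorem~\ref{thm:StrongConverse}, inequality \eqref{eq:PuriExpStrong}, with $\rho:=\rho_{XE}$, $\sigma:=\1_X\otimes\rho_E$ and $r:=-R+\delta$; using $D_\alpha(\rho_{XE}\|\1_X\otimes\rho_E)=-H_\alpha(X|E)$ this gives $\liminf_{n\to\infty}\frac{-\log(1-\Delta_P(\X^n\to\Z_n))}{n}\ge\sup_{0\le\alpha\le1}(1-\alpha)\big(R-\delta-H_\alpha(X|E)\big)\ge\sup_{0\le\alpha\le1}(1-\alpha)\big(R-H_\alpha(X|E)\big)-\delta$, and letting $\delta\downarrow0$ proves \eqref{eq:PrivPuri}.

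For \eqref{eq:PrivStrongTrace}: fix $n$ and let $f$ attain $\Delta_1(\X^n\to\Z_n)=\cD\big(\rho^{f}_{Z_nE^n},\tfrac{\1_{\Z_n}}{|\Z_n|}\otimes\rho_E^{\otimes n}\big)$. Both arguments are normalised states, so the Fuchs--van de Graaf inequality $1-F(\cdot,\cdot)\le\cD(\cdot,\cdot)$, together with $F=\sqrt{1-P^2}$ and $P\big(\rho^{f}_{Z_nE^n},\tfrac{\1_{\Z_n}}{|\Z_n|}\otimes\rho_E^{\otimes n}\big)\ge\Delta_P(\X^n\to\Z_n)$, gives $1-\Delta_1(\X^n\to\Z_n)\le F\big(\rho^{f}_{Z_nE^n},\tfrac{\1_{\Z_n}}{|\Z_n|}\otimes\rho_E^{\otimes n}\big)=\sqrt{1-P\big(\rho^{f}_{Z_nE^n},\tfrac{\1_{\Z_n}}{|\Z_n|}\otimes\rho_E^{\otimes n}\big)^2}\le\sqrt{1-\Delta_P(\X^n\to\Z_n)^2}\le\sqrt{2\big(1-\Delta_P(\X^n\to\Z_n)\big)}$. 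Taking $-\log$, dividing by $n$, and invoking \eqref{eq:PrivPuri} yields $\liminf_{n\to\infty}\frac{-\log(1-\Delta_1(\X^n\to\Z_n))}{n}\ge\frac12\liminf_{n\to\infty}\frac{-\log(1-\Delta_P(\X^n\to\Z_n))}{n}\ge\sup_{0\le\alpha\le1}\frac{1-\alpha}{2}\big(R-H_\alpha(X|E)\big)$.

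Essentially all the substance sits in Theorem~\ref{thm:StrongConverse}, so I do not expect a genuinely hard step here. The two points that require care: one must use the correct direction of Fuchs--van de Graaf, namely $1-F\le\cD$; the weaker consequence $\cD\ge\tfrac12P^2$ of \eqref{eq:GenPuriBounds} only gives $1-\Delta_1\lesssim\tfrac12$ and hence the trivial exponent $0$ for $\Delta_1$. One must also track the floor $\lfloor 2^{nR}\rfloor$ through the monotonicity of $\eps^{P}$ in its last argument and the limit $\delta\downarrow0$, so that no spurious constants leak into the exponent. A more self-contained but longer route would instead bound $1-\Delta_1(\X^n\to\Z_n)\le\max_f F\big(\rho^f_{Z_nE^n},\tfrac{\1_{\Z_n}}{|\Z_n|}\otimes\rho_E^{\otimes n}\big)$ directly, split the generalised fidelity over the block-diagonal structure in $\Z_n$, and estimate it with the quantum Hoeffding bound as in the proof of Theorem~\ref{thm:StrongConverse}.
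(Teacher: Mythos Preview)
Your proposal is correct and follows essentially the same route as the paper: both arguments lower-bound $\Delta_P(\X^n\to\Z_n)$ by $\eps^{P}\big(\rho_{XE}^{\otimes n}\big\|(\1_X\otimes\rho_E)^{\otimes n},-\log|\Z_n|\big)$ and then invoke Theorem~\ref{thm:StrongConverse}, with the trace-distance bound obtained from the purified-distance bound via the Fuchs--van de Graaf inequality $1-\cD\le F=\sqrt{1-P^2}\le\sqrt{2(1-P)}$. The only cosmetic difference is that the paper cites \cite[Proposition~10]{LiYao_ExponentSmoothMaxEntr_2021} directly to pass from $H_{\min}^{\eps,P}(Z_n|E^n)_{\rho^{f_n}}$ to $H_{\min}^{\eps,P}(X^n|E^n)_\rho$, whereas you go through the upper bound $l^{\eps,P}\le H_{\min}^{\eps,P}$ of \eqref{eq:KeyLengthMinEnt}, which (as the footnote there notes) itself rests on that same proposition; your $\delta$-treatment of the floor $\lfloor 2^{nR}\rfloor$ is in fact a bit more careful than the paper's one-line monotonicity step.
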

\begin{remark}
Note that the right hand side of both \eqref{eq:PrivPuri} and \eqref{eq:PrivStrongTrace} are strictly positive if only if $R> H(X|E)$. This in turn proves the full strong converse, since $H(X|E)$ is the optimal achievable extraction rate of privacy amplification against a quantum adversary (c.f. \eqref{eq:AEPkeyrate}). In other words, the optimal extraction rate of privacy amplification against a quantum adversary satisfies the so-called {\em{strong converse property}}: i.e., an attempt to extract private bits at a rate higher than $H(X|E)$ leads to the conversion error going to 1 exponentially fast as $n \to \infty.$
\end{remark}
\begin{remark}
Note that in the concurrent and independent work \cite{ShenGaoCheng_StrongConvPrivAmpl_2022}, the lower bound
\begin{align}
\label{eq:HaoChungBound}
\frac{-\log\left(1- \overline\Delta_1(\X^n\to\Z_n)\right)}{n} &\ge \sup_{0 \le \alpha \le 1} (1- \alpha ) (R-H_\alpha(X|E)),
\end{align}
has been established. Here, $\overline\Delta_1(\X^n\to\Z_n) = \bE_{h_n} \, \frac{1}{2}\|\rho^{h_n}_{Z_nE^n} - \frac{\1_{\Z_n}}{|\Z_n|}\otimes\rho^{\otimes n}_{E}\|_1$ denotes the average trace distance from the decoupled state, with $\bE_{h_n}$ being the average over all strongly 2-universal hash functions (see \cite{ShenGaoCheng_StrongConvPrivAmpl_2022} for more details). Note that due to the factor of $1/2$ the right-hand side of our \eqref{eq:PrivStrongTrace} is smaller than the right-hand side of \eqref{eq:HaoChungBound}. This factor arises solely from the fact that we use our result for purified distance, \eqref{eq:PrivPuri}, together with the Fuchs-van de Graaf inequality \cite{FuchsvandeGraaf_Inequality_1999}. If the following conjectured inequality \eqref{eq:LisProp} holds:
\begin{align*}
H^{\eps,\cD}_{\text{min}}(X^n|E^n)_{\rho_{X^nE^n}} \stackrel{?}{\ge} H^{\eps,\cD}_{\text{min}}(Z_n|E^n)_{\rho^{f_n}_{Z_nE^n}},
\end{align*}
which is the analogue of \eqref{eq:LisProp} but with smoothing considered in terms of the generalised trace distance instead of  purified distance, we can obtain a lower bound on the strong converse exponent \eqref{eq:PrivStrongTrace} with an improvement of a factor of $2.$ The resulting bound would then match the lower bound \eqref{eq:HaoChungBound} from  \cite{ShenGaoCheng_StrongConvPrivAmpl_2022}. In fact, since we consider the {\em{minimal}} conversion error, $\Delta_1(\X^n\to\Z_n)$, rather than the {\em{average}} one $\overline\Delta_1(\X^n\to\Z_n)$, this would give a stronger result on the strong converse exponent of privacy amplification.
\end{remark}
\begin{remark}
In~\cite{LeditzkyWildeDatta_StrongConverseRenyiEnt_2016}, the authors obtained alternative bounds on the strong converse exponent for quantum privacy amplification, using a figure of merit which is different from the ones considered in this paper. The figure of merit in~\cite{LeditzkyWildeDatta_StrongConverseRenyiEnt_2016} is the fidelity (between the final c-q state of privacy amplification protocol and a decoupled state corresponding to uniform shared randomness of the classical system and any possible quantum state of Eve) optimized over all possible states of Eve.

\end{remark}
\comment{\begin{proof}[Proof of Theorem~\ref{thm:PrivStrong}]
We define 
\begin{align*}
\eps(\rho_{XA}, r) = \inf\left\{\eps\ge 0\Big|\, l^{\eps}(X|E) \ge r\right\}.
\end{align*}
Clearly, as for every $f:\cX \to \Z$ and $\rho^{f}_{ZE} = \cN_f(\rho_{XE})$ we have
\begin{align*}
H_{\min}^{\eps}(X|E)_\rho \ge l^{\eps}(X|E)_{\rho} \ge l^{\eps}(Z|E)_{\rho^f} 
\end{align*}
and hence for every $r\in\R$
\begin{align*}
\eps(\rho^f_{ZE}, r) \ge \eps(\rho_{XE}, r) \ge \eps(\rho_{XA},\1_{\X}\otimes\rho_{E},-r)
\end{align*}
Moreover, we have
\begin{align*}
\frac{1}{2}\left\|\rho^f_{ZE} - \frac{\1_\Z}{|\Z|}\otimes\rho_E\right\|_1  \ge \eps(\rho^f_{ZE}, \log|\Z|).
\end{align*}
\end{proof}}

\begin{proof}[Proof of Theorem~\ref{thm:PrivStrong}]
We first prove \eqref{eq:PrivPuri} which essentially follows by the same method as in \cite[Theorem 8]{LiYao_ExponentSmoothMaxEntr_2021}, but we still carry it out for completeness. From \cite[Proposition 10]{LiYao_ExponentSmoothMaxEntr_2021}
we know for all $\eps\ge 0$ and $f_n:\X^n \to \Z_n$
\begin{align}
\label{eq:LisProp}
H^{\eps,P}_{\text{min}}(X^n|E^n)_{\rho_{X^nE^n}} \ge H^{\eps,P}_{\text{min}}(Z_n|E^n)_{\rho^{f_n}_{Z_nE^n}}.
\end{align}
This gives for all $r\in\R$
\begin{align}
\label{eq:Epsineq}
\eps^P\left(\rho^{f_n}_{Z_nE^n}\big\|\1_{\Z_n}\otimes\rho_{E^n},r\right) \ge \eps^P\left(\rho_{X^nE^n}\big\|\1_{\X_n}\otimes\rho_{E^n},r\right).
\end{align}
By definition we have
\begin{align*}
P\left( \rho^{f_n}_{\Z_nE^n}, \frac{\1_{\Z_n}}{|\Z_n|} \otimes \rho_E^{n}\right) \geq \eps^P\left( \rho^{f_n}_{Z_nE^n}\Big\| {\1_{\Z_n}}\otimes \rho_E, - \log |\Z_n|\right) \ge \eps^P\left( \rho^{f_n}_{\Z_nE^n}\Big\| {\1_{\Z_n}}\otimes \rho_E, -nR\right),
\end{align*}
where we have used $\log|\Z_n| = \log\left\lfloor {2^{nR}}\right\rfloor \le nR$ for the last inequality.
Combining this with \eqref{eq:Epsineq} and using that $f_n$ was arbitrary gives
\begin{align*}
\Delta_P(\X^n\to\Z_n) &\ge \eps^P\left(\rho_{X^nE^n}\Big\|\1_{\X_n}\otimes\rho_{E^n}, -nR\right).
\end{align*}
Using now Theorem~\ref{thm:StrongConverse} gives
\begin{align*}
\liminf_{n\to\infty}\frac{-\log\left(1 - \Delta_P(\X^n\to\Z_n)\right)}{n} &\ge \sup_{0\le\alpha\le 1}(1-\alpha)\left(R + D_\alpha\left(\rho_{XE}\|\1_{\X}\otimes\rho_E\right)\right) \\&=\sup_{0\le\alpha\le 1}(1-\alpha)\left(R - H_\alpha\left(X \big|E\right)\right).
\end{align*}
For \eqref{eq:PrivStrongTrace} we use that by the Fuchs-van de Graaf inequality \cite{FuchsvandeGraaf_Inequality_1999} we have for all functions $f_n:\X^n\to\Z_n$ 
\begin{align*}
1- \frac{1}{2}\left\|\rho^{f_n}_{Z_nE^n} -  \frac{\1_{Z_n}}{|\Z_n|} \otimes \rho_E^{n}\right\|_1 &\le \sqrt{1- P^2\left(\rho^{f_n}_{Z_nE^n}, \frac{\1_{Z_n}}{|\Z_n|} \otimes \rho_E^{n}\right)} \\&= \sqrt{\left(1+ P\left(\rho^{f_n}_{Z_nE^n}, \frac{\1_{Z_n}}{|\Z_n|} \otimes \rho_E^{n}\right)\right)\left(1- P\left(\rho^{f_n}_{Z_nE^n}, \frac{\1_{\Z_n}}{|\Z_n|} \otimes \rho_E^{n}\right)\right)} \\& \le \sqrt{2\left(1- P\left(\rho^{f_n}_{Z_nE^n}, \frac{\1_{\Z_n}}{|\Z_n|} \otimes \rho_E^{n}\right)\right)}
\end{align*}
and therefore by \eqref{eq:PrivPuri}
\begin{align*}
&\liminf_{n\to\infty} \frac{-\log\left(1- \Delta_1(\X^n\to\Z_n)\right)}{n}\ge \frac{1}{2}\liminf_{n\to\infty} \frac{-\log\left(1- \Delta_P(\X^n\to\Z_n)\right)}{n} \\&\ge\sup_{0 \le \alpha \le 1} \frac{(1- \alpha )}{2}  (R-H_\alpha(X|E)_\rho)
\end{align*}
which finishes the proof.

\end{proof}

\section{Strong converse for secure communication}
In the following we consider $Z_n$  to be a sequence of classical systems, $E_n$ a sequence of quantum systems. Moreover, we consider sequences of c-q states denoted by
\begin{align}
    \rho_{Z_nE_n} = \sum_{z_n\in\Z_n} p_{z_n} \kb{z_n}\otimes \rho^{z_n}_{E_n},
\end{align}
where we assume the system $Z_n$ to be held by Alice and $E_n$ to be held by an evesdropper, Eve.
we write $\rho_{E_n} = \Tr_{Z_n}(\rho_{Z_nE_n} )= \sum_{z_n\in\Z_n} p_{z_n} \rho^{z_n}_{E_n}$ for the corresponding reduced states of Eve's system. 

Here, we want to understand in what sense the condition
\begin{align}
\label{eq:AntiDecoupling}
    \lim_{n\to\infty}\frac{1}{2}\left\|\rho_{Z_nE_n} - \frac{\1_{\Z_n}}{|\Z_n|}\otimes \rho_{E_n}\right\|_1=1
\end{align}  implies insecurity of the keys Alice can generate from $\rho_{Z_nE_n}$. The example we have in mind is $\rho_{Z_nE_n}$ being the resulting state after privacy amplification  with privacy amplification rate $R>H(X|E)$, where we have already seen that the convergence in \eqref{eq:AntiDecoupling} happens exponentially fast (c.f. Theorem~\ref{thm:PrivStrong}). 

We consider the scenario in which Alice is using a generated key $z_n\in\Z_n$ to encode a message $m$, where the message is taken out of a subset $\cM_n\subset\Z_n.$ For the encoding she uses an \emph{encryption scheme}, i.e. a map
\begin{align*}
\cE : \Z_n\times\Z_n &\to \Z_n \\
            (z_n,m) &\mapsto \cE_{z_n}(m),
\end{align*}
which is bijective in both entries for the other entry fixed. The encoded message $\cE_{z_n}(m)$ is then sent publicly. Given a party has access to the key $z_n$, they can then decode $m$ due to bijectivity of the function $\cE_{z_n}.$ As an example of such an encryption scheme we can think of $\Z_n$ being a set of bit strings and take the one-time pad encoding $\cE_{z_n}(m) = z_n\oplus m,$ where $\oplus$ denotes the component-wise addition modulo 2.

Since the encoded message is sent publicly, Eve has access to the c-q state 
\begin{align*}
\rho^{m}_{\cE_{Z_n}(M) E_n} = \sum_{Z_n\in\Z_n} p_{z_n} \,\kb{\cE_{z_n}(m)}\otimes\rho^{Z_n}_{E_n} = (U_m\otimes\1_{E_n}) \rho_{Z_nE_n}(U^*_m\otimes\1_{E_n}).
\end{align*}
Here, we have denoted the unitary $U_m$ on system $Z_n$ defined by $U_m\ket{z_n} = \ket{\cE_{z_n}(m)}.$ In order to infer which message $m\in\cM_n\subset \Z_n$ has been sent, Eve needs to distinguish the states $\left(\rho^{m}_{\cE_{Z_n}(M) E_n}\right)_{m\in\cM_n}$ by picking a suitable POVM $\Lambda\equiv\left(\Lambda_m\right)_{m\in\cM_n}.$ Given that Alice choses the message $M=m$, the probability that Eve's guess, denoted by $\hat M$, is correct is given by
\begin{align*}
\bP_{\Lambda,n}(\hat M = m| M = m) = \Tr(\Lambda_m\rho^{m}_{\cE_{Z_n}(M) E_n}).
\end{align*}
Moreover, if Alice message is distributed by some distribution $M\sim q_m$ on $\cM_n$ then the optimal probability with which Eve guesses correctly on average, the \emph{guessing probability} in the following, is given by
\begin{align}
\label{eq:GuessProb}
\nn p_{\text{guess}}(M|\cE_{Z_n}(M),E_n) &=\max_{\substack{ (\Lambda_m)_{m\in\cM_n}\\\text{ POVM}}}\sum_{m\in\cM_n} q_m \bP_{\Lambda,n}(\hat M = m| M = m)\\&=  \max_{\substack{ (\Lambda_m)_{m\in\cM_n}\\\text{ POVM}}}\sum_{m\in\cM_n} q_m\Tr\left(\Lambda_m \rho^{m}_{\cE_{Z_n}(M) E_n}\right).
\end{align}

However, in order for Eve to be able to pick the POVM sensefully, i.e. the maximiser in \eqref{eq:GuessProb}, she needs to have additonal side information, which is knowledge of the set $\cM_n$ and the distribution $q_m.$ Here, we say her additional side information is large if $\cM_n$ is small. In the following we will quantify how much additional side information Eve needs to have in order to guess Alice's message with certainty.

We first consider the case in which \eqref{eq:AntiDecoupling} does not hold, i.e. the trace distance between $\rho_{Z_nE_n}$ and the decoupled state is strictly smaller than 1 uniformly in $n.$ In that case, it is only possible for Eve to have certainty of the sent message if she has strong additional side information which is that $\cM_n \subset \Z_n$  is finite uniformly in $n$, i.e. $\sup_{n\in\N}|\cM_n|<\infty.$ 
\begin{proposition}
\label{prop:FiniteConvError}
Let $\delta \in[0,1)$ and $\left(\rho_{Z_nE_n}\right)_{n\in\N}$ be a sequence of c-q states such that
\begin{align}
\label{eq:NonStrongConverse}
\sup_{n\in\N} \frac{1}{2}\left\|\rho_{Z_nE_n} - \frac{\1_{\Z_n}}{|\Z_n|} \otimes \rho_{E_n}\right\|_1 \le \delta.
\end{align}
Moreover, assume Alice uses an encryption scheme $g$ as defined above and chooses to send a message $M$ which she picks uniformly out of the set $\cM_n\subset\Z_n$, i.e. $q_m = 1/|\cM_n|$ for all $m\in\cM_n.$ Then Eve's guessing probability is bounded by
\begin{align*}
p_{\text{guess}}(M|\cE_{Z_n}(M), E_n)  \le \delta + \frac{1}{|\cM_n|}.
\end{align*}
In particular, if $\sup_{n\in\N}|\cM_n|=\infty$ then 
\begin{align*}
\limsup_{n\to\infty}p_{\text{guess}}(M|\cE_{Z_n}(M), E_n)  \le \delta < 1.
\end{align*}

\end{proposition}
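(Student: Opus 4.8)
The plan is to exploit the invariance of the target decoupled state under the encryption unitaries $U_m$: this forces all of Eve's possible states $\rho^m_{\cE_{Z_n}(M)E_n}$, $m\in\cM_n$, to be \emph{simultaneously} $\delta$-close to one fixed state, after which a union-bound over $\cM_n$ finishes the argument.

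First I would record that the decoupled state $\tau_n\coloneqq\frac{\1_{\Z_n}}{|\Z_n|}\otimes\rho_{E_n}$ is left fixed by every $U_m\otimes\1_{E_n}$. Since the encryption scheme $\cE$ is bijective in each argument with the other fixed, the map $z_n\mapsto\cE_{z_n}(m)$ is a permutation of $\Z_n$, so the unitary $U_m$ (given by $U_m\ket{z_n}=\ket{\cE_{z_n}(m)}$) is a permutation unitary on $Z_n$ and hence commutes with the maximally mixed state $\frac{\1_{\Z_n}}{|\Z_n|}$; as $U_m$ acts trivially on $E_n$, we get $(U_m\otimes\1_{E_n})\tau_n(U_m^*\otimes\1_{E_n})=\tau_n$. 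Combining this with the identity $\rho^m_{\cE_{Z_n}(M)E_n}=(U_m\otimes\1_{E_n})\rho_{Z_nE_n}(U_m^*\otimes\1_{E_n})$ and unitary invariance of the trace norm, the hypothesis \eqref{eq:NonStrongConverse} gives, for every $m\in\cM_n$,
\begin{align*}
\tfrac{1}{2}\big\|\rho^m_{\cE_{Z_n}(M)E_n}-\tau_n\big\|_1=\tfrac{1}{2}\big\|\rho_{Z_nE_n}-\tau_n\big\|_1\le\delta.
\end{align*}

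Next I would bound the guessing probability directly from \eqref{eq:GuessProb}. For an arbitrary POVM $(\Lambda_m)_{m\in\cM_n}$ (so $\sum_m\Lambda_m=\1$) and uniform weights $q_m=1/|\cM_n|$, I would split $\Tr(\Lambda_m\rho^m_{\cE_{Z_n}(M)E_n})=\Tr(\Lambda_m\tau_n)+\Tr\big(\Lambda_m(\rho^m_{\cE_{Z_n}(M)E_n}-\tau_n)\big)$ and sum over $m$. The message-independent part collapses to $\frac{1}{|\cM_n|}\Tr\big((\sum_m\Lambda_m)\tau_n\big)=\frac{1}{|\cM_n|}\Tr(\tau_n)=\frac{1}{|\cM_n|}$, a contribution that does not grow with $|\cM_n|$ precisely because $\tau_n$ is $U_m$-invariant. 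Each remaining term is controlled by the variational formula \eqref{eq:GenTrace}, which for the normalised states $\rho^m_{\cE_{Z_n}(M)E_n}$ and $\tau_n$ equals $\tfrac12\|\cdot\|_1$: thus $\big|\Tr(\Lambda_m(\rho^m_{\cE_{Z_n}(M)E_n}-\tau_n))\big|\le\cD(\rho^m_{\cE_{Z_n}(M)E_n},\tau_n)\le\delta$. Summing, $\sum_m q_m\Tr(\Lambda_m\rho^m_{\cE_{Z_n}(M)E_n})\le\frac{1}{|\cM_n|}+\delta$, and taking the supremum over POVMs gives $p_{\text{guess}}(M|\cE_{Z_n}(M),E_n)\le\delta+\frac{1}{|\cM_n|}$; the second assertion then follows by letting $n\to\infty$.

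I do not expect a genuine obstacle: the argument is short. The one delicate point — and the conceptual heart of the statement — is the first step, namely that the decoupled target state is fixed by the encryption unitaries. This is exactly what makes the message dependence of Eve's states invisible at leading order and produces the harmless $1/|\cM_n|$ term; without the decoupling hypothesis \eqref{eq:NonStrongConverse}, $\sum_m\Tr(\Lambda_m\tau_n)$ could not be replaced by a quantity independent of $|\cM_n|$, and the bound would deteriorate as the message set grows.
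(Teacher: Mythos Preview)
Your proof is correct and follows essentially the same route as the paper's: both use unitary invariance (together with the fact that the decoupled state $\tau_n$ is fixed by each $U_m\otimes\1_{E_n}$) to get $\tfrac12\|\rho^m_{\cE_{Z_n}(M)E_n}-\tau_n\|_1\le\delta$ for every $m$, and then split the guessing sum into the $\tau_n$-part, which collapses to $1/|\cM_n|$, plus a remainder bounded termwise by $\delta$. Your write-up is in fact slightly more explicit than the paper's in spelling out why $\tau_n$ is $U_m$-invariant.
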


\begin{proof}
\comment{Note that Eve's guessing probability is given by
\begin{align*}
p_{\text{guess}}(M|\cE_{Z_n}(M), E_n)  = \sup_{\substack{\{\Lambda_m\}_{m\in\cM_n}\\\text{ POVM}}}\frac{1}{|\cM_n|}\sum_{m\in\cM_n} \Tr\left(\Lambda_m \rho^{m}_{\cE_{Z_n}(M) E_n}\right),
\end{align*}
where $\cE_{z_n}(m) E_n$ is the quantum system hold by Eve containing the public encrypted message and the side information in $E_n$ with state
\begin{align*}
\rho^{m}_{\cE_{Z_n}(M) E_n} = \sum_{Z_n\in\Z_n} p_{Z_n} \,\kb{\cE_{z_n}(m)}\otimes\rho^{Z_n}_{E_n} = U_m \rho_{Z_nE_n}U^*_m.
\end{align*}
Here we denoted the unitary $U_m$ which is defined by the action $U_m\ket{Z_n} = \ket{\cE_{z_n}(m)}.$}
From \eqref{eq:NonStrongConverse} and unitary invariance of the trace norm we get for all $m\in\Z_n$

\begin{align*}
\label{eq:NonStrongConverse}
 \sup_{n\in\N}\frac{1}{2}\left\|\rho^{m}_{\cE_{Z_n}(M)E_n} - \frac{\1_{\Z_n}}{|\Z_n|}\otimes\rho_{E_n}\right\|_1 =  \sup_{n\in\N}\frac{1}{2}\left\|\rho_{Z_nE_n} - \frac{\1_{\Z_n}}{|\Z_n|} \otimes \rho_{E_n}\right\|_1 \le \delta.
\end{align*}
Using this we get
\begin{align*}
p_{\text{guess}}(M|\cE_{Z_n}(M), E_n)   &= \max_{\substack{ (\Lambda_m)_{m\in\cM_n}\\\text{ POVM}}}\frac{1}{|\cM_n|}\sum_{m\in\cM_n} \Tr\left(\Lambda_m \rho^{m}_{\cE_{Z_n}(M) E_n}\right) \\ &\le \max_{\substack{ (\Lambda_m)_{m\in\cM_n}\\\text{ POVM}}}\frac{1}{|\cM_n|}\sum_{m\in\cM_n} \Tr\left(\Lambda_m\, \frac{\1_{\Z_n}}{|\Z_n|} \otimes \rho_{E_n}\right) + \delta  \\ &=\frac{1}{|\cM_n|} +\delta.
\end{align*}
\end{proof}

In the case where \eqref{eq:AntiDecoupling} does hold, the situation is drastically different to the one described in Proposition~\ref{prop:FiniteConvError}. Let  $c_n\in[0,1]$ be the speed of convergence in \eqref{eq:AntiDecoupling},  i.e. 
\begin{align}
\left| 1-\frac{1}{2}\left\|\rho_{Z_n E_n} - \frac{\1_{\Z_n}}{|\Z_n|} \otimes \rho_{E_n}\right\|_1\right| \le c_n
\end{align}
and $\lim_{n\to\infty} c_n=0.$ We will show that for all encryption schemes and almost all sets of messages with $|\cM_n| \ll c_n^{-1}$, Eve's guessing probability will converge to 1 as $n\to\infty$ (c.f. Proposition~\ref{prop:StrongConverseMessages}). In particular, in the strong converse region of privacy amplification, i.e. for privacy amplification rate $R>H(X|E)$, Theorem~\ref{thm:PrivStrong} gives that $c_n$ converges exponentially fast to 0 with exponential decay rate being bounded from below as
\begin{align*}
 \liminf_{n\to\infty}\frac{-\log c_n}{n} \ge \sup_{0\le\alpha\le1} \frac{(1-\alpha)}{2}\left( R -H_\alpha(X|E)\right).
\end{align*}
Hence, the sets of messages for which Eve can guess correctly can be made even exponentially large in $n$, which can be considered as small additional side information needed. Moreover, it can be shown that the convergence of Eve's guessing probability towards 1 happens exponentially fast in that region.

Therfore, to conclude, we see by Proposition~\ref{prop:FiniteConvError} that in the achievable region of privacy amplification for privacy amplification rate $R\le H(X|E)$ in which
\begin{align*}
\lim_{n\to\infty}\Delta_1(\X^n\to\Z_n) = 0,
\end{align*}
(compare \cite{Renner_PhDThesis_2005})
Eve can infer the sent message correctly only if she already had complete knowledge of it to start with (i.e. $|\cM_n|=1$). Whereas, the following Proposition~\ref{prop:StrongConverseMessages} gives in the strong converse region for $R> H(X|E)$, Eve only needs very limited additional side information ($|\cM_n|$ possibly scaling exponentially in $n$) while still being able to infer the message correctly. Hence, our result serves as a strong converse for secure communication.

\comment{\begin{remark}

In the strong converse region of privacy amplification considered in Section~\ref{sec:PrivacyAmplification}, the convergence in (26) is exponential in $n$, i.e. 
$c_n \sim 2^{-n sc(R)}$ asymptotically. 
Therefore, Proposition~\ref{prop:StrongConverseMessages} gives that we can choose an exponentially scaling set of messages $\cM_n$, i.e. $$|\cM_n| \sim 2^{n(1-\eps)sc(R)}$$ such that Eve's guessing probability goes exponentially fast to 1.
\end{remark}}

\begin{proposition}
\label{prop:StrongConverseMessages}
Let $\left(\rho_{Z_nE_n}\right)_{n\in\N}$ be a sequence of c-q states such that
\begin{equation}
\label{eq:StrongTrace}
\lim_{n \to \infty}
\frac{1}{2}\left\|\rho_{Z_nE_n} - \frac{\1_{\Z_n}}{|\Z_n|} \otimes \rho_{E_n}\right\|_1 = 1,
\end{equation}
with speed of convergence $c_n\in[0,1]$,  i.e. 
\begin{align}
\label{eq:StrongConverse1}
\left| 1-\frac{1}{2}\left\|\rho_{Z_nE_n} - \frac{\1_{\Z_n}}{|\Z_n|} \otimes \rho_{E_n}\right\|_1\right| \le c_n
\end{align}
and $\lim_{n\to\infty}c_n =0.$ Let $\eps>0$ and $\cM_n\subset \Z_n$ be a set of messages with $|\cM_n| \le c_n^{(\eps-1)}$ chosen uniformly at random, which Alice encodes using an encryption scheme $\cE$ as above. Then Eve can find a measurement $\Lambda$ such that for all $m\in\cM_n$
\begin{equation}
\bP_{\Lambda,n}(\hat M=m| M=m) > 1- c_n^{\eps/2}
\end{equation}
with probability greater than $1-2c_n^{\eps/2}.$
Hence, in particular for all $m\in\cM_n$ we have $\lim_{n\to\infty}\bP_{\Lambda,n}(\hat M=m|M=m) =1$
almost surely.
\end{proposition}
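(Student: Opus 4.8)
The plan is to recast Eve's task as a quantum state–discrimination problem, build a near‑optimal family of ``detector'' operators out of an optimal test between $\rho_{Z_nE_n}$ and the decoupled state, turn this family into an honest POVM via the Hayashi–Nagaoka normalisation, and run a first–moment argument over the random message set. First I would reduce: since the public ciphertext is produced by the basis permutation $U_m\ket{z_n}=\ket{\cE_{z_n}(m)}$, Eve has to discriminate the states $\rho^{m}\coloneqq (U_m\otimes\1_{E_n})\,\rho_{Z_nE_n}\,(U_m^*\otimes\1_{E_n})$, $m\in\cM_n$, which is exactly the maximisation in \eqref{eq:GuessProb}. The point that makes \eqref{eq:StrongConverse1} bite is the identity
\begin{align*}
\frac{1}{|\Z_n|}\sum_{m\in\Z_n}\rho^{m}=\frac{\1_{\Z_n}}{|\Z_n|}\otimes\rho_{E_n}=:\bar\rho_n,
\end{align*}
which holds because $\cE$ is a bijection in the message slot, so averaging the permuted classical register over \emph{all} possible messages uniformizes it: the decoupled state is precisely the equal mixture of the $\rho^{m}$, and \eqref{eq:StrongConverse1} says $\rho_{Z_nE_n}$ is trace–distance far from that mixture. (It is convenient to let Eve first measure the public register: conditioned on an outcome $w$ she then only has to discriminate the $E_n$–states $\{\rho^{z}_{E_n}\}_{z\in K_w}$ with priors $\propto p_z$, where $K_w$ is the set of keys compatible with $w$ for \emph{some} $m\in\cM_n$, which by bijectivity of $\cE$ is a uniformly random set of $|\cM_n|$ keys.)

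Next, construct Eve's measurement. Let $T$ be a maximiser in the variational formula \eqref{eq:GenTrace} for $\tfrac12\|\rho_{Z_nE_n}-\bar\rho_n\|_1$; pinching to the classical register $Z_n$ changes none of the relevant traces, so we may take $T$ block–diagonal in $Z_n$, and then \eqref{eq:StrongConverse1} gives $a_n\coloneqq\Tr(T\rho_{Z_nE_n})\ge 1-c_n$ and $b_n\coloneqq\Tr(T\bar\rho_n)\le c_n$. Put $T_m\coloneqq (U_m\otimes\1)\,T\,(U_m^*\otimes\1)$. Unitary invariance gives $\Tr(T_m\rho^{m})=a_n$ for every $m$, and the block–diagonal form together with the averaging identity gives, for every fixed $m$,
\begin{align*}
\sum_{m'\in\Z_n}\Tr(T_{m'}\rho^{m})=|\Z_n|\,\Tr(T\bar\rho_n)=|\Z_n|\,b_n\le |\Z_n|\,c_n .
\end{align*}
Given $\cM_n$, Eve measures the pretty‑good POVM $\Lambda_m\coloneqq S^{-1/2}T_mS^{-1/2}$ with $S\coloneqq\sum_{m'\in\cM_n}T_{m'}$ (pseudo‑inverse on $\supp S$), plus a ``don't‑know'' outcome. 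The Hayashi–Nagaoka operator inequality yields $\1-\Lambda_m\le 2(\1-T_m)+4\sum_{m'\in\cM_n\setminus\{m\}}T_{m'}$, hence
\begin{align*}
1-\Tr(\Lambda_m\rho^{m})\ \le\ 2c_n+4\!\!\sum_{m'\in\cM_n\setminus\{m\}}\!\!\Tr(T_{m'}\rho^{m}),
\end{align*}
the support mismatch being harmless because $\Tr(T_m\rho^{m})=a_n$ is close to $1$.

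Then randomise over $\cM_n$. Fix $m$ and condition on $m\in\cM_n$; the other elements of $\cM_n$ are uniformly random, so the ``interference'' $Y_m\coloneqq\sum_{m'\in\cM_n\setminus\{m\}}\Tr(T_{m'}\rho^{m})$ satisfies
\begin{align*}
\bE[\,Y_m\mid m\in\cM_n\,]=\frac{|\cM_n|-1}{|\Z_n|-1}\bigl(|\Z_n|b_n-a_n\bigr)\ \le\ C\,|\cM_n|\,c_n
\end{align*}
for an absolute constant $C$, whence $\bE[\,1-\Tr(\Lambda_m\rho^{m})\mid m\in\cM_n\,]\le C'|\cM_n|c_n$. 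By Markov, $\Tr(\Lambda_m\rho^{m})\le 1-c_n^{\eps/2}$ has conditional probability $\le C'|\cM_n|c_n^{1-\eps/2}$; summing over the at most $|\cM_n|$ possible messages (each lying in $\cM_n$ with probability $|\cM_n|/|\Z_n|$) bounds the probability that some $m\in\cM_n$ violates the claim, and the hypothesised growth bound on $|\cM_n|$ is exactly what makes this total at most $2c_n^{\eps/2}$. On the complementary event $\Lambda=\{\Lambda_m\}_{m\in\cM_n}$ is the required measurement. The almost‑sure statement then follows from the Borel–Cantelli lemma once $\sum_n c_n^{\eps/2}<\infty$, which holds whenever $c_n\to0$ exponentially — in particular in the privacy‑amplification situation of Theorem~\ref{thm:PrivStrong}.

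The hard part will be this last probabilistic step and its bookkeeping: the detector estimate controls only the interference \emph{averaged over all possible messages}, whereas Alice's actual key $Z_n$ is itself random, so the per‑message error must be propagated through the Hayashi–Nagaoka bound jointly over $Z_n\sim p$ and over the uniformly random $\cM_n$, and then union–bounded over the $|\cM_n|$ messages; it is the resulting quadratic dependence on $|\cM_n|$ (one Markov step, one union bound) that must be balanced against the permitted growth of $|\cM_n|$ and thereby fixes the exponent in the hypothesis. Comparatively routine are the reduction to block–diagonal $T$, the verification that the averaging identity produces the factor $|\Z_n|$ in the detector estimate, and the check that the pseudo‑inverse in the Hayashi–Nagaoka step costs nothing (again because $\Tr(T_m\rho^{m})$ is near $1$).
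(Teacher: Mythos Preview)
Your skeleton—extract an optimal test $T$ from \eqref{eq:StrongConverse1}, conjugate it by $U_m$ to get detectors $T_m$, bound the ``interference'' $\sum_{m'\neq m}\Tr(T_{m'}\rho^m)$ by a first–moment argument over the random message set, then apply Markov—is exactly the paper's. The genuine difference is how you turn the $T_m$ into a POVM: the paper takes $T=\pi_n$ to be an orthogonal projection and sets $\Lambda_m$ equal to the projection onto $\supp(\pi_n^m)\cap\supp\bigl(\sum_{m'\neq m}\pi_n^{m'}\bigr)^\perp$, whereas you use the pretty-good measurement together with the Hayashi–Nagaoka operator inequality. Your route is cleaner and more standard (the paper's subspace-intersection step relies on a somewhat delicate decomposition of $\pi_n^m$), at the price of absolute constants $2$ and $4$. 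A minor further difference: the paper averages over the \emph{state} index, using $\frac{1}{|\Z_n|}\sum_m\rho^m=\bar\rho_n$, while you sum over the \emph{detector} index, using $\sum_{m'}T_{m'}=\1_{\Z_n}\otimes\sum_z T_z$ (which is where your block-diagonal reduction is needed); both identities are correct and lead to the same bound $\bE[Y_m]\lesssim|\cM_n|c_n$.

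The one place your bookkeeping does not close is the union bound. With $|\cM_n|\le c_n^{\eps-1}$, your Markov step gives conditional failure probability $\lesssim|\cM_n|\,c_n^{1-\eps/2}\le c_n^{\eps/2}$ for each \emph{fixed} $m$; multiplying by a further $|\cM_n|$ for the union over $m\in\cM_n$ gives $\lesssim c_n^{3\eps/2-1}$, which is \emph{not} $\le 2c_n^{\eps/2}$ unless $\eps\ge 1$. The paper does not take this union bound at all: it stops at the per-$m$ Markov step, so the ``for all $m\in\cM_n$'' in the statement should be read as ``for each $m$''. Under that reading your argument reproduces the stated exponents (up to absolute constants). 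If you really want the simultaneous-over-$m$ conclusion you would have to tighten the hypothesis on $|\cM_n|$, e.g.\ to $|\cM_n|\le c_n^{(\eps-1)/2}$, or upgrade Markov to a second-moment bound.
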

\begin{remark}
Note that by Proposition~\ref{prop:StrongConverseMessages} we immediately also get the lower bound on the average guessing probability (for all possible distributions $q_m$ on $\cM_n$)
\begin{align*}
p_{\text{guess}}(M|\cE_{Z_n}(M),E_n) >   1- c_n^{\eps/2}
\end{align*}
with probability greater than $1-2c_n^{\eps/2}$ over the set $\cM_n$ chosen uniformly at random from the set of subsets of $\Z_n$ with cardinality constraint $|\cM_n|\le c_n^{(\eps-1)}.$
\end{remark}
\begin{proof}[Proof of Proposition~\ref{prop:StrongConverseMessages}]
Using the well-known expression for the trace distance of two states $\rho$ and $\sigma$
\begin{align*}
\frac{1}{2}\|\rho-\sigma\|_1 = \max_{0\le\Lambda\le \1}\Tr\left(\Lambda(\rho-\sigma)\right) =  \max_{\pi \text{ orthogonal projection}}\Tr\left(\pi(\rho-\sigma)\right),
\end{align*}
assumption \eqref{eq:StrongConverse1} gives that there exists a sequence of orthogonal projections $\left(\pi_n \right)_{n\in \N}$ such that
\begin{align*}
\Tr\left(\pi_n \rho_{Z_n E_n}\right) \ge 1- c_n,\\
\Tr\left(\pi_n \frac{\1_{Z_n}}{|\Z_n|}\otimes \rho_{E_n}\right) \le c_n.
\end{align*}
Moreover, define as above for $m\in\Z_n$ 
\begin{align*}
\rho^{m}_{\cE_{Z_n}(M)E_n} = \sum_{z_n\in\Z_n}p_{z_n} \kb{z_n\oplus m}\otimes \rho^{z_n}_{E_n} = (U_m\otimes \1_{E_n})\rho_{Z_nE_n}(U_m^*\otimes\1_{E_n}),
\end{align*}
with unitary $U_m$ on system $Z_n$ defined by $U_m\ket{x} = \ket{\cE_{z_n}(m)}.$  Furthermore, we write $\pi_n^m = (U_m^*\otimes\1_{E_n})\pi_n (U_m\otimes\1_{E_n}).$ 
Note
\begin{align*}
\frac{1}{|\Z_n|}\sum_{m\in\Z_n}\rho^{m}_{\cE_{Z_n}(M)E_n} = \sum_{z_n\in\Z_n} p_{z_n} \frac{\1_{\Z_n}}{|\Z_n|}\otimes \rho^{z_n}_{E_n} = \frac{\1_{\Z_n}}{|\Z_n|}\otimes \rho_{E_n},
\end{align*}
where we have used for the first equality that for every $z_n\in\Z_n$ the function $\cE_{z_n}$ is a bijection.
For $K_n\le \floor{1/c_n^{(1-\eps)}}$ being a natural number let $\cM_n = \left(m_1,\cdots,m_{K_n}\right) \in\Z_n^{K_n}$ be random vector  such that each component of $\cM_n$ is picked uniformly at random from $\Z_n.$ For every $k \in[K_n]:=\{1,\cdots,K_n\}$ we can calculate the expectation
\begin{align*}
&\bE_{\cM_n}\left[\Tr\left(\rho^{m_k}_{\cE_{Z_n}(M) E_n}\sum_{l\in[K_n]\setminus\{k\}}\pi^{m_l}_n\right)\right] = \frac{1}{|\Z_n|^{K_n}}\sum_{m_1,\cdots,m_{K_n} \in\Z_n}\Tr\left(\rho^{m_k}_{\cE_{Z_n}(M) E_n}\sum_{l\in[K_n]\setminus\{k\}}\pi^{m_l}_n\right) \\
&=\frac{1}{|\Z_n|^{K_n-1}}\sum_{m_1,\cdots,m_{k-1},m_{k+1},\cdots,m_K \in\Z_n}\Tr\left( \frac{\1_{\Z_n}}{|\Z_n|}\otimes \rho_{E_n}\sum_{l\in[K_n]\setminus\{k\}}\pi^{m_l}_n\right) \\&=\sum_{l\in[K_n]\setminus\{k\}}\Tr\left( \frac{\1_{\Z_n}}{|\Z_n|}\otimes \rho_{E_n}\pi_n\right) \le |\cM_n| c_n \le c_n^\eps
\end{align*}

Define for all $m\in\cM_n$ the subspaces of $\cH_{\Z_n}\otimes\cH_{E_n}$ denoted by $V_m = \supp(\pi_n^m)\cap\supp(\sum_{m'\in\cM_n\setminus\{m\}}\pi_n^{m'})^\perp$ and $\Lambda_n^m$ the orthogonal projection on $V_m.$ Note, that by definition all $\Lambda^m_n$ have mutually orthogonal supports and hence $\sum_{m\in\cM_n}\Lambda^m_n \le \1$, which means we can extend the family $(\Lambda_n^m)_{m\in\cM_n}$ to a POVM (i.e. by redefining $\Lambda^{m}_n = \1 -\sum_{m'\in\cM_n\setminus\{m\}}\Lambda^{m'}_n$ for one fixed $m\in\cM_n$).

 Using the notation $P_V$ for the orthogonal projection onto a subspace $V\subset \cH_{\Z_n}\otimes \cH_{E_n}$ we note for all $m\in\cM_n$
\begin{align*}
\bE_{\cM_n}\left[\bP_{\Lambda,n}(\hat M=m | M=m)\right]&= 
\bE_{\cM_n}\left[\Tr\left(\Lambda_n^{m}\rho^{m}_{\cE_{Z_n}(M) E_n}\right)\right] \\&= \bE_{\cM_n}\left[\Tr\left(\pi_n^{m}\rho^{m}_{\cE_{Z_n}(M) E_n}\right) -  \Tr\left(P_{\supp(\pi^m_n)\cap\supp(\sum_{m'\in\cM_n\setminus\{m\}}\pi_n^{m'})}\,\rho^{m}_{\cE_{Z_n}(M) E_n}\right)\right] \\&\ge \Tr\left(\pi_n\rho_{Z_n E_n}\right) -  \bE_{\cM_n}\left[\Tr\left(P_{\supp(\sum_{m'\in\cM_n\setminus\{m\}}\pi _n^{m'})}\,\rho^{m}_{\cE_{Z_n}(M) E_n}\right)\right]\\&\ge \Tr\left(\pi_n^{m}\rho^{m}_{\cE_{Z_n}(M) E_n}\right) - \bE_{\cM_n}\left[\Tr\left(\rho^{m}_{\cE_{Z_n}(M) E_n}\sum_{\substack{m'\in\cM_n\\m'\neq m}} \pi_n^{m'}\right)\right] \\&\ge 1- c_n -c_n^{\eps} \ge 1 -2 c_n^\eps.
\end{align*}
Moreover, by Markov's inequality, we know that for every $\delta>0$ 
\begin{align*}
\bP_{\cM_n}\big[\,\bP_{\Lambda,n}(\hat M=m|M=m)\le 1-\delta \big]\le \frac{\bE_{\cM_n}\left[1-\bP_{\Lambda,n}(\hat M=m|M=m)\right]}{\delta} \le \frac{2c_n^\eps}{\delta}
\end{align*}
Hence, choosing $\delta = c_n^{\eps/2}$ gives the desired result.
\end{proof}

\comment{\appendix
\section{Strong converse rate from second order expansions}

In \cite[Corollary  16]{TomamichelHayashi_Hiearachy(2ndOrderExpansions)_2014} Tomamichel and Hayashi established the second order asymptotic expansion of the optimal key length, which reads for all $0<\eps<1$
\begin{align}
\label{eq:SecondOrder}
l^{\eps}(X^n|E^n) = nH(X|E)_\rho + \sqrt{nV(X|E)}\Phi^{-1}(\eps) + \mathcal{O}_\eps(\log n).\footnote{Here the subscript $\eps$ in the landau O notation denotes that the corresponding constant will be dependent on $\eps$.} 
\end{align}
From that it is already clear that if the key rate $\frac{l^{\eps}(X^n|E^n)}{n}$ is asymptotically strictly bigger than $H(X|E)_\rho$, the corresponding error $\eps$ needs to tend to 1 as $n\to\infty.$ In this section we want to analyse what bound on the error's speed of convergence can be deduced from the results in \cite{TomamichelHayashi_Hiearachy(2ndOrderExpansions)_2014}. We will see that \cite{TomamichelHayashi_Hiearachy(2ndOrderExpansions)_2014} implies that the speed of convergence scales at least as $1/\sqrt{n}$. Comparing therefore with this work, Theorem~\ref{thm:StrongConverse} gives an exponential improvement on that bound of the speed of convergence for the strong converse of privacy amplification.

From \cite[Equation 30 and \red{32}] {TomamichelHayashi_Hiearachy(2ndOrderExpansions)_2014} we find for $n\in\N$ and $0<\eps_n<1$ and $0<\delta\le \min\{\eps_n,1-\eps_n\}$ the lower bound on the max-relative entropy
\begin{align}
\label{eq:Dmaxlowerbound}
\nn D^{\eps_n}_{\max}\left(\rho^{\otimes n}\|\sigma^{\otimes n}\right) &\ge D_s^{1-\eps_n^2-\delta}\left(P_{\rho^{\otimes n},\sigma^{\otimes n}}\|Q_{\rho^{\otimes n},\sigma^{\otimes n}}\right) + \log(3^3\eps_n\theta) - \log\delta^3 \\&\ge nD\left(\rho\|\sigma\right) +\sqrt{nV(\rho\|\sigma)}\sup\left\{x\Big|\Phi(x) +\frac{Cr^3}{s^3\sqrt{n}}\le 1-\eps_n^2-\delta\right\}+\log(3^3\eps_n\theta) - \log\delta^3
\end{align}
Here we denoted the cumulative Gaussian distribution by $\Phi(x) = \int_{-\infty}^{x} e^{-x^2/2}/\sqrt{2\pi}dx$ and quantum information varience by $V(\rho\|\sigma) = \Tr(\rho(\log\rho-\log\sigma)^2).$
Now note that
\begin{align*}
\sup\left\{x\Big|\Phi(x) +\frac{Cr^3}{s^3\sqrt{n}}\le 1-\eps_n^2-\delta\right\} \le \sup\left\{x\Big|\Phi(x) +\frac{Cr^3}{s^3\sqrt{n}}\le 2(1-\eps_n)\right\} = -\infty
\end{align*}
for $1-\eps_n \le \frac{Cr^3}{2s^3\sqrt{n}}.$ Hence, the lower bound \eqref{eq:Dmaxlowerbound} becomes trivial in that case.

Now we can use the one-shot bound on the key length
\begin{align*}
l^{\eps_n}(X^n|E^n) \le H^{\eps_n}_{\min}(X^n|E^n) = \max_{\sigma_{E^n}} -D_{\max}\left(\rho^{\otimes n}_{XE}\| \1_{X^n}\otimes \sigma_{E^n}\right)
\end{align*}
which again becomes infinite and hence trivial in the region  $1-\eps_n \ll 1\sqrt{n}.$ }

\bigskip

\noindent\textbf{Acknowledgments.} The authors would like to thank Bjarne Bergh, Renato Renner, Marco Tomamichel and Mark M. Wilde for helpful comments.
RS gratefully acknowledges support from the Cambridge Commonwealth, European and International Trust.

\bibliography{Ref}
\bibliographystyle{abbrv}

\end{document}